\documentclass[final,onefignum,onetabnum]{siamart171218}

\usepackage{braket}
\usepackage{booktabs}
\usepackage{float}
\usepackage{graphicx} 
\usepackage{tabularx}
\usepackage{caption}
\usepackage{comment}
\usepackage{subcaption}
\usepackage{amsfonts}
\usepackage{graphicx}
\usepackage{algorithmic}
\usepackage{amsopn}

\ifpdf
  \DeclareGraphicsExtensions{.pdf,.png,.jpg}
\fi

\newsiamremark{remark}{Remark}
\newsiamremark{hypothesis}{Hypothesis}
\crefname{hypothesis}{Hypothesis}{Hypotheses}
\newsiamthm{claim}{Claim}

\headers{Quantum Preconditioning For Constrained Optimization Problems}{A. Ramesh, B. Sundar, M. Dupont and D.E. Bernal Neira}

\title{Quantum Preconditioning For Constrained Optimization Problems}

\author{Authors}

\makeatletter
\newcommand*{\addFileDependency}[1]{%
  \typeout{(#1)}%
  \@addtofilelist{#1}%
  \IfFileExists{#1}{}{\typeout{No file #1.}}%
}
\makeatother

\makeatletter
\def\refstepcounter@optarg[#1]#2{%
  \cref@old@refstepcounter{#2}%
  \cref@constructprefix{#2}{\cref@result}%
  \@ifundefined{cref@#1@alias}%
    {\def\@tempa{#1}}%
    {\def\@tempa{\csname cref@#1@alias\endcsname}}%
  \protected@edef\cref@currentlabel{%
    [\@tempa][\arabic{#2}][\cref@result]%
    \csname p@#2\endcsname\csname the#2\endcsname}%
}
\makeatother

\newif\ifshowcomments
\showcommentsfalse

\ifpdf
\hypersetup{
  pdftitle={Quantum Preconditioning For Constrained Optimization Problems},
  pdfauthor={Authors}
}
\fi

\author{
Anurag Ramesh\thanks{Rigetti Computing, 775 Heinz Avenue, Berkeley, CA 94710, USA, \\ Davidson School of Chemical Engineering, Purdue University, West Lafayette, IN 47907, USA
(\email{rames102@purdue.edu}).}
\and
Bhuvanesh Sundar\thanks{Rigetti Computing, 775 Heinz Avenue, Berkeley, CA 94710, USA
(\email{bsundar@rigetti.com}).}
\and
Maxime Dupont\thanks{Rigetti Computing, 775 Heinz Avenue, Berkeley, CA 94710, USA
(\email{mdupont@rigetti.com}).}
\and
David E. Bernal Neira\thanks{Corresponding Author. Davidson School of Chemical Engineering, Purdue University, West Lafayette, IN 47907, USA
(\email{dbernaln@purdue.edu}).}
}
\begin{document}

\newcommand{\anurag}[1]{\ifshowcomments\textcolor{blue}{A: #1}\fi}
\newcommand{\david}[1]{\ifshowcomments\textcolor{green}{D: #1}\fi}
\newcommand{\bhuvanesh}[1]{\ifshowcomments\textcolor{red}{B: #1}\fi}
\newcommand{\maxime}[1]{\ifshowcomments\textcolor{cyan}{M: #1}\fi}

\definecolor{responsecolor}{rgb}{0.55,0,0.55}
\newcommand{\response}[1]{\ifshowcomments\textcolor{responsecolor}{[\textbf{Response:} #1]}\fi}

\maketitle

\begin{abstract}
We study the effect of quantum preconditioning on constrained combinatorial optimization problems, focusing on balanced graph bi-partitioning.
The proposed approach uses two-point correlations between decision variables derived from the Quantum Approximate Optimization Algorithm (QAOA) to construct a modified objective function that is subsequently provided to mixed-integer programming (MIP) solvers.
The preconditioned MIP formulation retains the original hard constraint, and all incumbent solutions are evaluated under the original objective. Computational experiments on dense, weighted complete-graph instances show that the preconditioned problem instances reach near-optimal solutions faster, with most of the benefit already realized at the shallowest QAOA depth tested. Solver callback trajectories show this arises from earlier discovery of high-quality incumbents during the solution search.
These results support a hybrid optimization framework in which quantum algorithms provide problem-specific information to guide classical exact MIP solvers.
\end{abstract}

\begin{keywords}
combinatorial optimization, preconditioning, mixed-integer programming, QAOA, quantum algorithms, correlations
\end{keywords}

\begin{AMS}
90C11, 90C27, 81P68, 05C85, 90C59
\end{AMS}

\section{Introduction}
\label{sec:intro}

Optimization under hard constraints is a fundamental challenge in applied mathematics and operations research.
In many settings, feasibility is not negotiable.
A valid solution must achieve a good objective value while satisfying structural constraints~\cite{Nocedal2006, NemhauserWolsey1988}.
Constrained combinatorial optimization~\cite{Korte2012, Papadimitriou1998} arises in a wide range of real-world applications, including network design~\cite{Saberi2023, Gavish1991}, vehicle routing~\cite{Cook2024LastMile, Dantzig1959, Helsgaun2017LKH3}, and parallel scientific computing, where graph partitioning underlies domain decomposition and sparse-matrix computations~\cite{Buluc2016GraphPartitioning,aboumrad2025accelerating}.
Mixed-integer programming (MIP)~\cite{Wolsey1998, Conforti2014} provides a general framework for formulating constrained combinatorial optimization problems, with hard constraints encoded explicitly in the formulation.
Modern exact MIP solvers~\cite{gurobi_website,Achterberg2009SCIP} primarily rely on the branch-and-bound algorithm~\cite{LandDoig1960}, augmented by complementary techniques such as primal heuristics~\cite{Berthold2006, Berthold_Lodi_Salvagnin_2025} and cutting-plane methods~\cite{Zhang2025Learning} to improve the efficiency of the solution search process to find good incumbents and also certify optimality.
Simulated annealing~\cite{Kirkpatrick1983} and parallel tempering~\cite{Earl2005ParallelTempering} provide alternative heuristic approaches, but do not by themselves provide the same exact certificate.

Mixed-integer programming problems are NP-hard in general~\cite{Papadimitriou1998, NemhauserWolsey1988}, although many structured instances are tractable in practice.
Consequently, despite substantial improvements in MIP solver performance driven by algorithmic advances~\cite{Koch2022Progress}, obtaining high-quality solutions or proving optimality remains challenging for large-scale instances involving hundreds or more decision variables.
Furthermore, their practical performance depends strongly on the problem structure and formulation~\cite{AchterbergKochTuchscherer2008}.
Even when the underlying optimal solution remains unchanged, a reformulation that guides the search toward high-quality incumbent solutions earlier can substantially alter solver behavior and performance.
A well-established strategy for MIP solvers is \emph{presolve}, a collection of pre-processing techniques applied before the branch-and-bound search begins with the goal of tightening the formulation, fixing variables, and reducing problem size~\cite{Andersen1995, Gamrath2015presolving}.
While \emph{presolve} is a standard component of state-of-the-art MIP solvers~\cite{gurobi_website, Achterberg2009SCIP}, it operates entirely on the structure of the given formulation.
A complementary direction is to use external information about the problem's solution structure to construct a modified objective that guides the solver toward high-quality incumbents earlier in the search. 

In this work, we investigate whether quantum algorithms can serve as a principled source of such external information for constrained combinatorial optimization problems, extending the quantum preconditioning framework of Dupont et al.~\cite{9prw-684p} to the setting of exact MIP solvers.
Figure~\ref{fig:workflow} summarizes this workflow.

\begin{figure*}[t]
    \centering
    \includegraphics[width=0.95\linewidth]{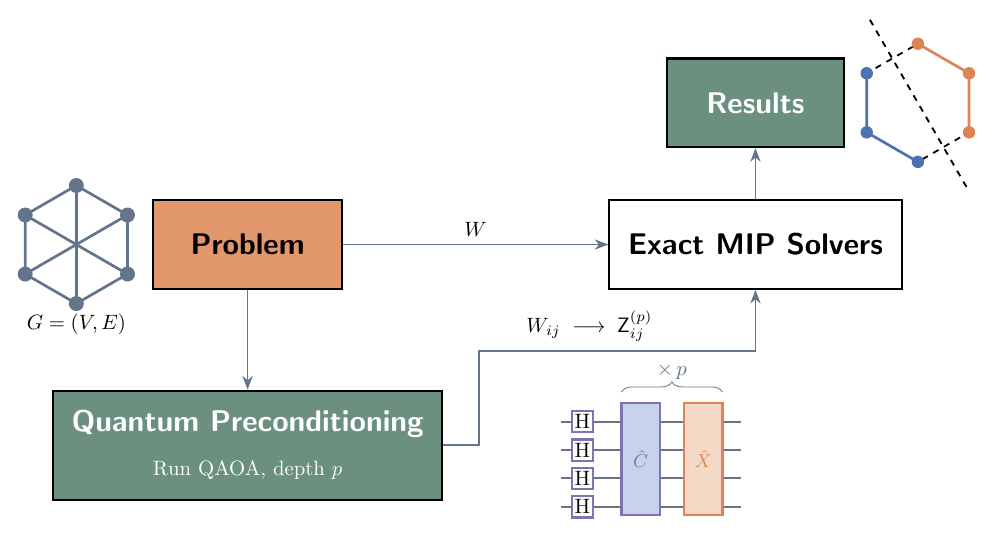}
    \caption{Overview of the quantum preconditioning workflow for constrained optimization problems using exact MIP solvers.
    A QAOA circuit of depth $p$ is optimized on the original weighted graph $G$ (with the balance constraint enforced as a soft penalty within the ansatz, Section~\ref{sec:quantum_preconditioning}) to produce a correlation matrix $\mathsf{Z}$.
    This matrix replaces the original weight matrix $W$ in the objective, and the resulting preconditioned problem is solved by exact MIP solvers such as \texttt{Gurobi} with the hard balance constraint retained.
    Every incumbent solution is re-evaluated under the original objective~\eqref{eq:cut_objective}.}
    \label{fig:workflow}
\end{figure*}

The remainder of the paper is organized as follows.
Section~\ref{sec:background} reviews the relevant quantum computing, quantum optimization, and quantum preconditioning background.
Section~\ref{sec:methods} introduces the optimization problem, develops the quantum preconditioning framework, and describes the experimental protocol.
Section~\ref{sec:results} presents the empirical results.
Section~\ref{sec:disc} discusses interpretation and limitations, and Section~\ref{sec:conclusions} concludes.

\section{Background}
\label{sec:background}


Quantum computing offers a fundamentally different paradigm for optimization by using quantum-mechanical phenomena to process information in ways unavailable to classical machines.
It has therefore attracted significant interest as a possible means of improving classical optimization approaches~\cite{Abbas2024, BERNALNEIRA2024108627}.
Quantum hardware has advanced rapidly from early noisy intermediate-scale quantum devices, whose usefulness is limited by noise and imperfect control~\cite{Preskill2018}, toward fault-tolerant operation~\cite{Preskill2025}.
A comprehensive review of quantum optimization is provided
in~\cite{Abbas2024, Koch2026QOBLIB}, covering algorithm design, benchmark selection, resource accounting, and comparison with classical methods as the central open questions for demonstrating practical benefit.

\subsection{Variational quantum algorithms}
Quantum approaches to combinatorial optimization are largely inspired by quantum adiabatic evolution~\cite{Farhi2000Adiabatic}, in which a quantum system initialized in the ground state of a simple Hamiltonian evolves toward one whose ground state encodes the target optimization problem.
In this context, the Hamiltonian is a quantum operator obtained by
encoding the classical cost function such that measuring the energy of
any computational basis state yields the corresponding objective value.
The optimal solution therefore corresponds to the ground state of this
operator.
Quantum annealing applies this principle on analog quantum hardware to search for low-energy configurations of optimization problems represented as Ising models~\cite{Finnila1994, Kadowaki1998, Das2008, Ising1925, Lucas2014}.
On programmable gate-based quantum hardware, variational quantum algorithms use a parameterized quantum circuit, consisting of a sequence of quantum gates, in a hybrid loop whose parameters are optimized classically to minimize the expectation value of a cost Hamiltonian~\cite{Cerezo2021}.
The Quantum Approximate Optimization Algorithm (QAOA) is one such variational quantum algorithm used to solve combinatorial optimization
problems. 
QAOA can be understood as a discrete, gate-based analog of quantum annealing. 
It replaces the continuous evolution of the Hamiltonian found in quantum annealing with $p$ discrete alternating layers, each applying a unitary generated by the problem Hamiltonian $\hat{C}$ (parametrized by angle $\gamma_\ell$) followed by a unitary generated by a mixing Hamiltonian $\hat{B} = \sum_{i \in V} \hat{X}_i$ (parametrized by angle $\beta_\ell$). 
The mixing Hamiltonian plays a role similar to the driver in quantum annealing.
It generates transitions between computational basis states, preventing the system from becoming trapped and allowing the algorithm to explore the solution space.
The angles $\{(\beta_\ell, \gamma_\ell)\}_{\ell=1}^p$ replace the fixed annealing schedule with a variational discrete path, optimized classically to minimize $\langle \hat{C} \rangle$~\cite{Farhi2014QAOA,
Blekos2023QAOAReview}.
In the limit $p \to \infty$ with appropriately chosen parameters, QAOA recovers the adiabatic limit and converges to the optimal solution~\cite{Farhi2014QAOA}.
There have also been several variants that build on these, such as recursive QAOA~\cite{bravyi2020obstacles, Bravyi2022, dupont2023quantum} and adaptive QAOA~\cite{zhu2022adaptive}.

Handling hard constraints is a central challenge in quantum optimization.
The most common approach is to encode constraints as soft penalty terms in the cost Hamiltonian~\cite{Lucas2014}, trading exact feasibility for a simpler circuit structure. 
An alternative is to design feasibility-preserving mixer operators that confine the quantum state evolution entirely within the feasible subspace, as in the quantum alternating operator ansatz~\cite{Hadfield2019QAOA} and its constraint-preserving variants~\cite{Fuchs2022}.
While this approach guarantees feasibility by construction, the resulting circuit depths and gate counts can be substantially higher than those of the standard QAOA mixer, limiting practical applicability on near-term quantum hardware.
A third strategy is warm-starting, in which a relaxed classical solution initializes the quantum circuit~\cite{Egger2021}, biasing the search toward feasible regions without explicitly enforcing constraints.
Each of these approaches involves a fundamental tradeoff between feasibility guarantees, circuit complexity, and solution quality.
The present work takes a complementary direction. 
Rather than modifying the quantum circuit to handle constraints, the balance constraint is enforced as a hard constraint in the exact MIP solver, and the quantum algorithm is used solely to extract correlation structure that reshapes the solution search space.

\subsection{Quantum preconditioning}

A distinct use of QAOA is to extract structural information about a
problem rather than to produce solutions directly.
The two-point correlations between pairs of decision variables, computed from an optimized QAOA circuit, encode information about promising variable alignments.
A strongly positive correlation between two variables indicates they tend to take the same value in low-energy states of the circuit, while a strongly negative correlation indicates they tend to take opposite values.
Recursive QAOA~\cite{Bravyi2022} exploits this idea by using single-variable expectation values to iteratively fix variables and reduce problem size. A related approach, quantum preconditioning, introduced by Dupont et~al.~\cite{9prw-684p}, uses two-point correlations to construct a correlation matrix that replaces the original weight matrix in the objective function.
The analytical motivation for this substitution is provided by~\cite{dupont2024extending}, which shows that QAOA two-point correlations can be used in a relax-and-round scheme with approximation guarantees matching the Goemans-Williamson algorithm~\cite{GoemansWilliamson1995} for the maximum cut problem on certain graph families, establishing that the correlation matrix encodes meaningful geometric information about the problem's solution structure.

The quality of preconditioning is controlled by the QAOA circuit depth $p$, with deeper circuits encoding richer correlation structure and the infinite-depth ($p\rightarrow \infty$) limit rendering the preconditioned problem to trivially recover the global optimal solution of the original problem~\cite{9prw-684p}.

In~\cite{9prw-684p}, the quantum preconditioning framework was evaluated on three problem classes: Sherrington-Kirkpatrick spin glasses~\cite{Sherrington1975}, random 3-regular graph maximum-cut problems~\cite{Dembo2017}, and a real-world power-grid energy optimization problem, using classical heuristics such as simulated  annealing~\cite{Kirkpatrick1983} and the Burer--Monteiro method~\cite{BurerMonteiro2003, BurerMonteiro2005} to solve both the original and preconditioned problems.
Dupont et al.~\cite{9prw-684p} found that quantum preconditioning let both heuristics converge faster on all three problem classes.
For random 3-regular maximum-cut instances, accounting for the time to generate the correlations still yielded a net quantum-inspired advantage through circuit emulation, the largest advantage window of the three cases.
Sherrington--Kirkpatrick spin glasses, preconditioned at $p=1$ with fixed tabulated angles, showed a narrower advantage window.
The real-world power-grid problem, preconditioned with variationally optimized angles, showed a window that widened from $p=1$ to $p=2$ and was also demonstrated experimentally on superconducting quantum hardware at $p=1$.

Here, we propose to extend the quantum preconditioning framework to hard-constrained problems.
By using quantum preconditioning with an exact MIP solver, we enforce the hard constraint with the MIP solver and investigate whether a similar performance improvement can be observed as well.

\subsection{Contributions}
The contributions of this paper are threefold.
First, we define a quantum preconditioning framework for the graph bi-partitioning problem based on QAOA-derived correlation matrices, including a rigorous theoretical analysis of the penalty parameter governing the balance constraint within the QAOA ansatz.
Second, we systematically study the effect of quantum preconditioning on the exact MIP solver behavior, focusing on improved incumbent solution discovery, runtime, and scaling.
Third, we provide detailed empirical evidence on dense all-to-all connected graph instances demonstrating regimes in which quantum preconditioning improves solver performance, along with an analysis of when and why such improvements may occur.

\section{Methods}
\label{sec:methods}

\subsection{Problem Definition}
\label{sec:problem}
In this work, the optimization problem under consideration is the balanced graph bi-partitioning problem, also known as minimum graph bisection.
It is important in applications such as network design~\cite{Saberi2023} and parallel scientific computing~\cite{Buluc2016GraphPartitioning}, where a graph must be divided into equally sized parts while minimizing communication across the partition.
The combination of a hard cardinality constraint and an NP-hard objective~\cite{GareyJohnsonStockmeyer1976} therefore provides a representative test case for studying the effect of quantum preconditioning.
We define the problem as follows: Let $G = (V, E)$ be a weighted undirected graph, where $V$ denotes the set of vertices and $E$ denotes the set of edges.
Each edge $(i,j) \in E$ is associated with a weight $w_{ij} \geq 0$.
The objective is to partition the vertex set $V$ into two disjoint subsets $V_1$ and $V_2$ such that the total weight of the edges crossing the partition is minimized, while ensuring that both subsets contain an equal number of vertices.

Following the standard Ising formulation~\cite{Lucas2014}, we introduce spin variables $z_i \in \{-1, +1\}$ for each vertex $i \in V$ where $z_i = +1$ indicates membership in $V_1$ and $z_i = -1$ indicates membership in $V_2$.
The problem takes the form
\begin{align}
\min_{\mathbf{z}} \quad & \frac{1}{2} \sum_{(i,j) \in E} w_{ij}
    \left(1 - z_i z_j \right) \label{eq:cut_objective} \\
\text{s.t.} \quad & \sum_{i \in V} z_i = 0, \quad
    z_i \in \{-1, +1\}, \quad \forall i \in V.
    \label{eq:balance_constraint}
\end{align}
Here, the term $\frac{1}{2}(1 - z_iz_j)$ equals $1$ when vertices $i$ and $j$ are in opposite partitions and $0$ otherwise, so the objective sums the total weight of edges crossing the partition.
The balance constraint~\eqref{eq:balance_constraint} enforces equal partition sizes and requires $|V|$ to be even.
To establish a classical baseline, the optimization problem~\eqref{eq:cut_objective}--\eqref{eq:balance_constraint} is solved directly using the commercial MIP solver \texttt{Gurobi}~\cite{gurobi_website}.
To this end, we transform the problem to an equivalent binary formulation using the substitution $x_i = (z_i + 1)/2 \in \{0, 1\}$, equivalently $z_i = 2x_i - 1$, so that $x_i = 1$ corresponds to $V_1$ and $x_i = 0$ to $V_2$.
Substituting into~\eqref{eq:cut_objective} and expanding gives
\begin{align}
\min_{\mathbf{x}} \quad & \sum_{(i,j) \in E} w_{ij}
    \left(x_i + x_j - 2 x_i x_j \right) \label{eq:cut_objective_binary} \\
\text{s.t.} \quad & \sum_{i \in V} x_i = \frac{|V|}{2}, \quad
    x_i \in \{0, 1\}, \quad \forall i \in V,
    \label{eq:balance_constraint_binary}
\end{align}
Formulation~\eqref{eq:cut_objective_binary}--\eqref{eq:balance_constraint_binary} is a binary quadratic program with a linear balance constraint.
\texttt{Gurobi} can solve this model directly.
Its \emph{presolve} may reformulate binary quadratic terms, and because each quadratic term contains a binary variable, it can construct an equivalent convex formulation without requiring the \texttt{NonConvex} parameter~\cite{gurobi_website}.
All solution quality evaluations throughout this work are performed against the original objective~\eqref{eq:cut_objective}.

\subsection{Quantum Preconditioning} 
\label{sec:quantum_preconditioning}
QAOA~\cite{Farhi2014QAOA} produces problem dependent correlation information, from which we construct a modified quadratic objective for the original problem that is then provided to the MIP solver.

The QAOA ansatz produces the quantum state
\begin{equation}
    \ket{\psi} =\left[\prod\nolimits_{\ell=1}^pe^{-i\beta_\ell\sum_j\hat{X}_j}e^{-i\gamma_\ell \hat C}\right] \ket{\psi_0},
    \label{eq:qaoa}
\end{equation}
where
\begin{equation}
    \hat{C} = \frac{1}{2}\sum_{(i,j) \in E} w_{ij}(1-\hat{Z}_i \hat{Z}_j) + \rho\!\left(\sum_{i \in V} \hat{Z}_i\right)^{\!2}
        \label{eq:Cost_Hamiltonian}
\end{equation}
is the quantized QAOA cost Hamiltonian with a soft penalty for constraint violations.
Here $\ket{\psi_0}=\ket{+}^{\otimes |V|}$ is the equal-superposition state, and $\hat X_j$ and $\hat Z_j$ are Pauli operators acting on qubit $j$~\cite{NielsenChuang2010}.
Each of the $p$ layers alternates a phase operation generated by $\hat C$ with a mixing operation generated by $\sum_j\hat X_j$.
The ansatz has $2p$ parameters $\{(\beta_\ell,\gamma_\ell)\in\mathbb{R}^2:1\leq\ell\leq p\}$, where $p$ is the QAOA depth and controls the quality of preconditioning~\cite{9prw-684p}.
The parameters are chosen to minimize $\braket{\hat C}$.
The nonnegative penalty $\rho\in\mathbb{R}^{+}$ controls how strongly the ansatz is biased toward balanced solutions.

For the optimized angles $\{(\beta_\ell,\gamma_\ell)\in\mathbb{R}^2:1\leq\ell\leq p\}$, quantum preconditioning constructs a modified objective from the two-point correlations of the state $\ket{\psi}$ as
\begin{equation}
\mathsf{Z}_{ij} = \delta_{ij} - \braket{\psi \vert \hat{Z}_i\hat{Z}_j \vert \psi}.
\label{eq:Correlation_Matrix}
\end{equation}
Here, the matrix $\mathsf{Z}$ defines the \textit{quantum-preconditioned} problem.
We emphasize that the penalty parameter $\rho$ enters only the QAOA cost Hamiltonian~\eqref{eq:Cost_Hamiltonian} as a soft representation of the balanced graph bi-partitioning constraint.
The balance requirement remains a hard constraint in the preconditioned problem actually solved using the exact MIP solver.

A useful extreme limit to understand the preconditioned problem is the $p \rightarrow \infty$ limit~\cite{9prw-684p}.
This case is neither a deployable algorithm nor a physically generated QAOA result, and is included solely for explanatory purposes.
At $p\rightarrow\infty$, the preconditioned problem is
\begin{equation}
Z_{ij}^{(\infty)} = \delta_{ij} - (z^\star)_i (z^\star)_j,
\label{eq:oracle_Z}
\end{equation}
where $\mathbf{z}^\star$ is the optimal solution to the original problem~\eqref{eq:cut_objective}--\eqref{eq:balance_constraint}.
For the problem studied here,
\begin{equation}
\tfrac12\mathbf z^\top\mathsf Z^{(\infty)}\mathbf z
=\tfrac12\left[n-(\mathbf z^\top\mathbf z^\star)^2\right].
\label{eq:oracle_identity}
\end{equation}
The squared overlap is at most $n^2$, with equality only when $\mathbf z=\pm\mathbf z^\star$.
The oracle objective is therefore globally minimized at those two configurations~\cite{9prw-684p}.

The total wall-clock time required to calculate \eqref{eq:Correlation_Matrix} constitutes the quantum-preconditioning time referenced in Section~\ref{sec:metrics} and is not included in the runtime measurements of MIP solver performance reported in Section~\ref{sec:results}.

\subsection{Solving the Preconditioned Problem}
\label{sec:solving_precond}

Once the correlation matrix $\mathsf{Z}$, defined in~\eqref{eq:Correlation_Matrix} is obtained, it is used to construct a preconditioned formulation of the original objective in~\eqref{eq:cut_objective}.

\begin{align}
\min_{\mathbf{z}} \quad & \tfrac{1}{2}\,\mathbf{z}^\top \mathsf{Z}\, \mathbf{z}
    = \sum_{(i,j) \in E} \mathsf{Z}_{ij}\, z_i z_j \label{eq:cut_objective_precond} \\
\text{s.t.} \quad & \sum_{i \in V} z_i = 0, \quad z_i \in \{-1, +1\},
    \quad \forall i \in V.\label{eq:balance_constraint_precond}
\end{align}
The diagonal entries vanish because $\mathsf Z_{ii}=1-\langle\hat Z_i^2\rangle=0$.
This is the quadratic Ising objective used in the analysis of Section~\ref{sec:quantum_preconditioning} where the matrix $\mathsf Z$ supplies the pairwise couplings.
The balance constraint~\eqref{eq:balance_constraint_precond} is unchanged from the original problem.
The preconditioned problem is then passed to \texttt{Gurobi}~\cite{gurobi_website} using the same binary transformation $x_i = (z_i + 1)/2$ described in Section~\ref{sec:problem} and is solved using the same solver settings as the baseline.

\subsection{Solver callback}
\label{sec:callback}
During the solve, we record every feasible solution reported through the \texttt{Gurobi} solver callback.
Concretely, at each incumbent event, the solver's current assignment is decoded from binary variables $x \in \{0,1\}^n$ to spins $z_i = 2x_i - 1$ and evaluated under the original objective~\eqref{eq:cut_objective}.
This callback-based evaluation is central to our analysis as it allows us to study not only the final outcome of a run, but also the time-dependent evolution of solution quality.
Therefore, this assessment provides mechanistic evidence that the preconditioned formulation genuinely improves the solver's search trajectory for the original problem.

\subsection{Numerical setup}
\label{sec:setup}

We evaluate the method on 50 randomly generated dense all-to-all connected graph instances with problem sizes $n \in \{12, 16, 20, \ldots, 40\}$ and edge weights sampled i.i.d.\ from $U(0, 2)$.
The same 50 instances are reused across all methods, QAOA depths, and penalty parameter values $\rho$.

The QAOA variational parameters $\{(\beta_\ell,\gamma_\ell)\}$ are optimized by minimizing $\braket{\hat C}$ with the Broyden--Fletcher--Goldfarb--Shanno (BFGS) quasi-Newton method~\cite{Nocedal2006}.
For $n\le 20$, we use $100$ random restarts per instance drawn uniformly from $\beta_\ell\in[0,\pi/2]$ and $\gamma_\ell\in[0,\pi/\sqrt n]$, retaining the best solution.
For $n>20$, the variational parameters are not re-optimized and instead transferred from the optimized values at $n_{\rm ref}=20$ using the rescaling $\beta_\ell(n)=\beta_\ell(20)$ and
$\gamma_\ell(n)=\gamma_\ell(20)\sqrt{20/n}$. 
This approach is motivated by the empirical observation that optimal QAOA parameters concentrate around problem-size-independent values when appropriately rescaled, and that parameters optimized for one instance can be successfully reused across structurally similar instances of different sizes with minimal performance degradation~\cite{Galda2021, shaydulin2023parameter, sundar2026qubit}.\footnote{The \(1/\sqrt{n}\) scaling of \(\gamma_\ell\) reflects the normalization of the cost Hamiltonian for dense problems with an extensive number of interactions. For a complete graph, the \(O(n^2)\) pairwise interaction terms are commonly rescaled by \(1/\sqrt{n}\) to maintain an appropriate extensive energy scale as the system size increases. Since QAOA evolves according to \(e^{-i\gamma_\ell H_C}\), this Hamiltonian normalization is equivalently implemented by scaling \(\gamma_\ell \propto 1/\sqrt{n}\). This scaling is also consistent with the large-\(n\) behavior of optimal QAOA angles observed for the Sherrington--Kirkpatrick model~\cite{Farhi2014QAOA, 9prw-684p}.}

Quantum preconditioning is performed for QAOA circuit depths $p \in \{1, 2, 3\}$.
For each depth $p$, we study a grid of penalty values spanning $\rho\in[0,2]$.
At $p=1$, we evaluate a tractable closed-form expression for the
expectation values in~\eqref{eq:Correlation_Matrix}~\cite{dupont2024extending}.
For $p>1$, we use numerically exact state-vector emulation.

All MIP solves are performed using \texttt{Gurobi 13.0} under two solver configurations: \emph{presolve}-enabled and \emph{presolve}-disabled.
The \emph{presolve}-disabled setting isolates the direct effect of quantum preconditioning on the branch-and-bound search, without the additional pre-processing routines applied by \texttt{Gurobi}.
Both configurations use 8 threads from a Linux cluster with 48 AMD EPYC 7643 2.3 GHz CPUs, an optimality gap tolerance of $10^{-6}$, and no wall-clock time limit.
All remaining \texttt{Gurobi} parameters (including \texttt{MIPFocus}, \texttt{Heuristics}, and \texttt{Cuts}) are left at their default values.
Solvers are terminated upon proving optimality or satisfying the gap tolerance.
Results under both configurations are reported throughout Section~\ref{sec:results}.

\subsection{Penalty parameter analysis}
\label{sec:penalty_analysis}

The penalty parameter $\rho$ in~\eqref{eq:Cost_Hamiltonian} controls how strongly the balance constraint $\sum_{i \in V} z_i = 0$ is enforced within the QAOA ansatz.
At $\rho = 0$, the ansatz concentrates purely on cut quality, and as $\rho$ increases, it is increasingly biased toward balanced configurations.
Before fixing a study range for $\rho$, we ask how large the penalty must be for the ansatz to actually favor balanced solutions in the adiabatic limit.

To make this precise, we analyze a penalized version of the original cut objective~\eqref{eq:cut_objective} where the hard balance constraint~\eqref{eq:balance_constraint} is incorporated as a quadratic soft penalty, as shown in~\eqref{eq:penalized_spin},
\begin{equation}
    C(\mathbf{z}) = H(\mathbf{z}) + \rho\!\left(\sum_{i \in V} z_i\right)^{\!2}.
    \label{eq:penalized_spin}
\end{equation}
Here, $H(\mathbf{z}) = \frac{1}{2}\sum_{(i,j)\in E}w_{ij}(1-z_iz_j)$ is the original cut objective with spin variables $z_i \in \{-1, 1\}$.
We call $\rho$ \emph{exactly sufficient} if every minimizer of~\eqref{eq:penalized_spin} satisfies $\sum_{i \in V} z_i = 0$, so that the penalty alone enforces feasibility as $p\rightarrow\infty$, without an explicit constraint.
Appendix~\ref{sec:penalty_proofs} proves the following deterministic exactness threshold for $\rho$.

\begin{theorem}[Deterministic exactness threshold]
\label{thm:det_threshold}
Let $G$ be any graph on $n = |V|$ vertices with weights $w_{ij} \in [0,2]$.
If $\rho > n/4$, then every minimizer of~\eqref{eq:penalized_spin} satisfies the balance constraint $\sum_{i \in V} z_i = 0$, and the minimizers of~\eqref{eq:penalized_spin} coincide exactly with those of the original constrained problem~\eqref{eq:cut_objective}--\eqref{eq:balance_constraint}.
This threshold is sharp: for every $\rho < n/4$ there exists a graph with weights in $U[0,2]$ whose every minimizer of~\eqref{eq:penalized_spin} is infeasible.
\end{theorem}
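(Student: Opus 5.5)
The upper bound comes from an exchange argument: starting from any unbalanced configuration, flip a few spins to reach a balanced one and check that $C$ strictly decreases. Sharpness comes from an explicit two-block weighted graph.

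\emph{Reduction to an imbalance comparison.} Since the balance constraint requires $n$ even, any $\mathbf z$ has $s=\sum_i z_i$ even. Suppose $s=2k$ with $k\ge 1$; the case $s<0$ follows by the global flip $\mathbf z\mapsto-\mathbf z$, which leaves $C$ invariant. The $+1$ side $P$ then has $n/2+k$ vertices and the $-1$ side $M$ has $n/2-k$.

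\emph{Exchange step.} Pick any $S\subseteq P$ with $|S|=k$ and flip those spins to obtain $\mathbf z'$, which is balanced. The penalty drops by exactly $\rho s^2=4\rho k^2$. The only edges whose cut status changes are those between $S$ and the rest:
\begin{itemize}
\item edges from $S$ to $P\setminus S$ become cut, and there are $k(n/2+k-k)=kn/2$ of them;
\item edges from $S$ to $M$ become uncut, which can only decrease $H$.
\end{itemize}
With $w_{ij}\le 2$ this gives
\begin{equation*}
H(\mathbf z')-H(\mathbf z)\le 2\cdot kn/2 = kn .
\end{equation*}
Hence
\begin{equation*}
C(\mathbf z')-C(\mathbf z)\le kn-4\rho k^2 = k(n-4\rho k)<0
\end{equation*}
whenever $\rho>n/4$, because $k\ge1$. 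So no unbalanced $\mathbf z$ can be a minimizer, and every minimizer of~\eqref{eq:penalized_spin} is balanced. On the balanced set the penalty vanishes and $C=H$. Therefore $\min C$ equals the constrained minimum of~\eqref{eq:cut_objective}--\eqref{eq:balance_constraint}, and the two minimizer sets coincide.

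\emph{Sharpness.} Fix $\rho<n/4$. Split $V$ into $A$ with $|A|=n/2+1$ and $B$ with $|B|=n/2-1$. Set $w_{ij}=2$ for $i,j\in A$ and $w_{ij}=0$ for all other pairs.
\begin{itemize}
\item The configuration with $z=+1$ on $A$ and $z=-1$ on $B$ has $H=0$ and $s=2$, so $C=4\rho<n$.
\item Any balanced $\mathbf z$ must split $A$, with $a$ vertices on one side, where $1\le a\le n/2$. Its cost is therefore $C=H=2a(n/2+1-a)$. This expression is concave in $a$, so its minimum over the range is at an endpoint, giving $C\ge 2\cdot(n/2)=n$.
\end{itemize}
So every minimizer of~\eqref{eq:penalized_spin} is infeasible.

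The main obstacle is reconciling this with the phrase ``weights in $U[0,2]$'', since the construction uses the boundary values $0$ and $2$. The fix is a perturbation argument. The gap $n-4\rho>0$ is strict, so replacing the weights by $2-\varepsilon$ and $\varepsilon$ for small enough $\varepsilon>0$ keeps every balanced cost above $4\rho+O(\varepsilon n^2)$. This strict separation survives the perturbation, so the same conclusion holds for an open set of weight matrices, which has positive probability under $U(0,2)$.
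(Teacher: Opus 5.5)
Your proof is correct. It has the same overall shape as the paper's: an exchange argument for $\rho>n/4$, plus a two-block witness for sharpness.

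\emph{Exchange step.} The paper supposes an infeasible minimizer whose majority side has size $m+r$ and flips a \emph{single} majority vertex. It bounds the cut increase by $2(m+r-1)$ against a penalty drop of $4\rho(2r-1)$ and derives a contradiction. You instead flip all $k$ excess vertices at once and land directly on a balanced configuration, with cut increase at most $kn$ against a penalty drop of $4\rho k^2$. In the binding case $k=r=1$ the two moves are identical, so the threshold is the same.

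\emph{What each version gives.} Yours is a direct global comparison: every infeasible $\mathbf z$ is strictly beaten by an explicit feasible configuration, with margin $k(n-4\rho k)$, and no appeal to minimality is needed. The paper's version instead gives a local fact: every infeasible configuration has a strictly improving single flip toward balance. The paper reuses that one-step inequality for the equality analysis at $\rho=n/4$ in Theorem~\ref{thm:as_threshold} and for Proposition~\ref{prop:expected_repair}.

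\emph{Sharpness.} Your witness is the paper's construction with the blocks relabeled. Your concavity check on $2a(n/2+1-a)$ makes explicit the paper's brief claim that every feasible cut has weight at least $n$.

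\emph{The wording weights in $U[0,2]$.} The paper accepts the $\{0,2\}$-valued witness for this theorem and proves the positive-probability version separately, in Theorem~\ref{thm:as_threshold}, by the same perturbation you sketch. Your sentence about balanced costs staying above $4\rho+O(\varepsilon n^2)$ is clearer as two separate bounds, with $m=n/2$. After perturbation, the infeasible configuration costs at most $4\rho+\varepsilon(m^2-1)$. Every balanced configuration costs at least $(2-\varepsilon)m$. Hence any $\varepsilon<(n-4\rho)/(m^2+m-1)$ works.
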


This bound is a worst-case statement, witnessed by weights in $U[0,2]$, and it grows linearly with $n$, reaching $10$ at $n=40$, far above the range $\rho \in [0,2]$ studied in this work.
Appendix~\ref{sec:penalty_proofs} shows that the gap is a genuine average-versus-tail effect rather than an artifact of the worst-case bound. Specifically, the appendix provides an almost-sure threshold under random weights that scales as $n/4$.
Per-instance computations show that the empirical critical penalty $\rho ^{\star }$ has a median near 0.35 and a 99th percentile below 0.90.
Finally, an expected-repair-step argument shows that the penalty favors the balance constraint in expectation once $\rho > 0.25$, independent of $n$.

Figure~\ref{fig:penalties_per_instance} shows the empirical distribution of $\rho^\star$ across problem sizes.
\begin{figure}[!t]
    \centering
    \includegraphics[width=0.5\linewidth]{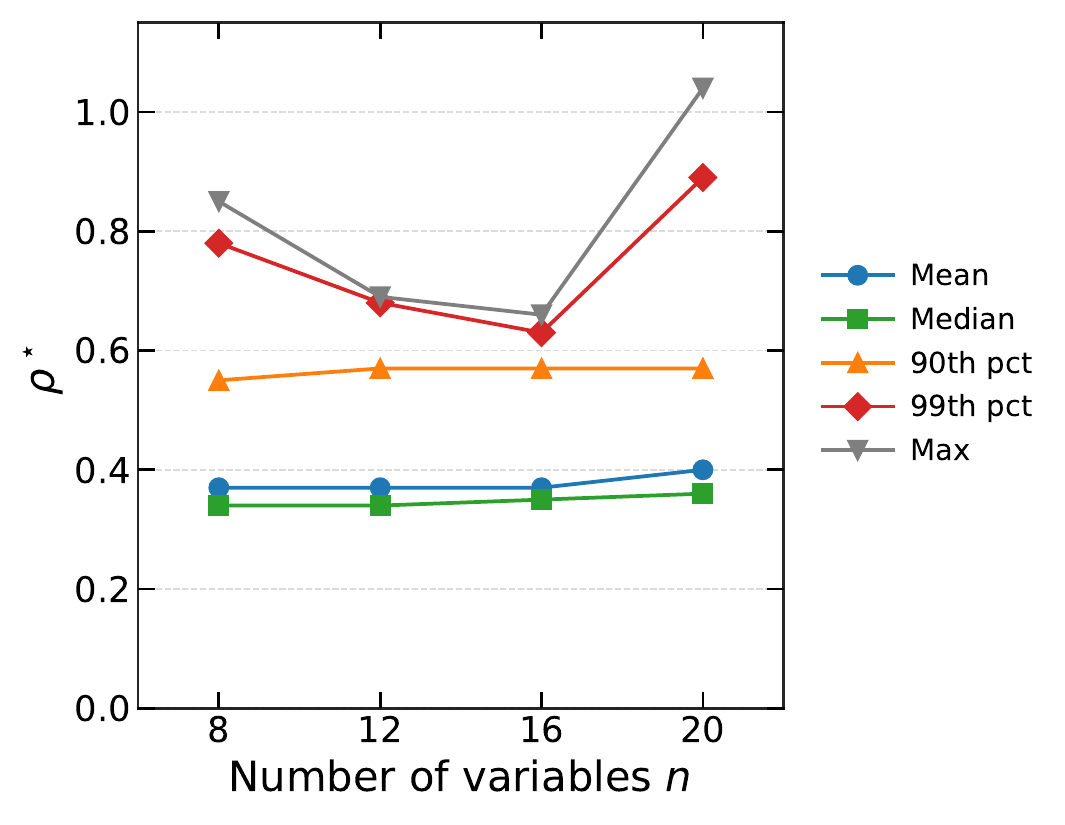}
    \caption{Empirical distribution of the per-instance critical penalty $\rho^\star$ for complete graphs with i.i.d. $U[0,2]$ weights, as a function of $n$, over the 50 instances}
    \label{fig:penalties_per_instance}
\end{figure}
Together, these results place the study grid $\rho \in [0,2]$ well below the worst-case threshold while covering nearly all of the empirical per-instance distribution, motivating the range used in the experiments below.

\subsection{Performance metrics}
\label{sec:metrics}

We use several complementary performance metrics to assess the effect of quantum preconditioning.

\paragraph{Approximation ratio}
For a feasible solution to the preconditioned problem with original-objective value $f$ and the corresponding baseline optimum $f^*$ for the same instance and \emph{presolve} setting, we define the approximation ratio as
\begin{align}
r = \frac{f}{f^*}
\end{align}
where $r = 1$ when the preconditioned solve recovers the baseline optimum exactly and $r > 1$ otherwise.

\paragraph{Quantum-preconditioning time}
The quantum-preconditioning time is the time required to produce the correlation coefficients of Equation~\eqref{eq:Correlation_Matrix}, and it depends on how the correlations are obtained.

In the present study, the correlations are computed classically. The correlations at $p=1$ are evaluated with the analytic expression of~\cite{dupont2024extending}, whose cost is polynomial in $n$, while the correlations at $p=2,3$ are obtained by exact state-vector emulation of the depth-$p$ QAOA circuit.
Because the instances are all-to-all connected, the emulation evolves the full $n$-qubit wavefunction, whose $2^n$ amplitudes make its time and memory scale exponentially in $n$.
This exponential cost of exact emulation, rather than any property of the preconditioner, restricts state-vector results to small $n$ and shallow depth.

On quantum hardware, correlations are estimated from measurement shots, and the cost is determined by the circuit. 
A depth-$p$ QAOA circuit on the dense $n$-variable problem, compiled onto a linear qubit chain through a swap network, has depth $O(pn)$~\cite{dupont2025benchmarking}.
Therefore, each shot runs in time $O(pn)$ and estimating the correlations from $N_{\rm ex}$ shots scales as
\begin{align}
t_{\rm precond} \sim O\!\left(p\,n\,N_{\rm ex}\right),
\label{eq:preconditioning_time_model}
\end{align}
with system-dependent prefactors for gate durations, measurement and reset latency, and classical overhead.
Since the time for the exact MIP solver to reach the $\epsilon$-optimal threshold grows exponentially in $n$, the solver runtime is the dominant cost at large problem sizes.

\paragraph{Runtime}
The runtime is the total wall-clock time until solver termination, including the time required to certify optimality or otherwise conclude the solve.

\paragraph{Time to an $\epsilon$-optimal threshold}
Let $f_s^*$ denote the baseline optimum for a problem instance $s$.
For a tolerance $\epsilon \geq 0$, the time to the $\epsilon$-optimal threshold is
\begin{equation}
    t_{\epsilon,s}:=\inf\left\{t\geq0:
    f_{\mathrm{orig},s}(t) \leq f_s^* + |f_s^*|\,\epsilon\right\}.
    \label{eq:eps_hit}
\end{equation}
The special case $\epsilon = 0$ is the first time the preconditioned solve reaches the baseline optimum exactly.
For $\epsilon > 0$, the metric records when the solve first reaches a solution within the prescribed relative tolerance.
We report two values of the tolerance: $\epsilon = 0$ (exact recovery of the baseline optimum) and $\epsilon = 0.01$ (a solution within $1\%$ of the baseline optimum).

\paragraph{Penalized average time}
For each run $s\in \mathcal S$, we define the penalized time
\begin{equation}
T_s=
\begin{cases}
t_{\epsilon,s}, & \text{if run $s$ reaches the threshold},\\
T_{\mathrm{run},s}+\alpha T_{\mathrm{base},s},
& \text{otherwise}.
\end{cases}
\qquad
\overline T=\frac1{|\mathcal S|}\sum_{s\in\mathcal S}T_s .
\label{eq:PAR}
\end{equation}
The penalized average time is $\overline T$.
Here $T_{\mathrm{run},s}$ is the runtime of the preconditioned solve, $T_{\mathrm{base},s}$ is the baseline runtime for the same instance, and $\alpha$ is a scalar penalty factor.
We use $\alpha=1$, so a successful run contributes its observed time to the threshold and an unsuccessful run contributes its runtime plus the cost of one baseline solve.
We report the success fraction alongside the penalized average time.

\section{Results}
\label{sec:results}

\subsection{Callback trajectory analysis}
\label{sec:callback_traj}
To understand how quantum preconditioning affects the branch-and-bound search, we analyze the feasible solutions reported through the \texttt{Gurobi} solver callback.
For every callback solution, the assignment obtained from the preconditioned problem is re-evaluated under the original objective.
Figure~\ref{fig:trajectory} shows the resulting incumbent trajectory for a single instance at $n = 40$ and QAOA depth $p = 1$ across different values of $\rho$ with \emph{presolve} enabled.
We find that for all tested values of $\rho$, the preconditioned problem quickly reaches a near-optimal incumbent.
However, finding the baseline optimum as an incumbent takes longer and depends sensitively on $\rho$.

\begin{figure*}[t]
    \centering
    \includegraphics[width=0.9\linewidth]{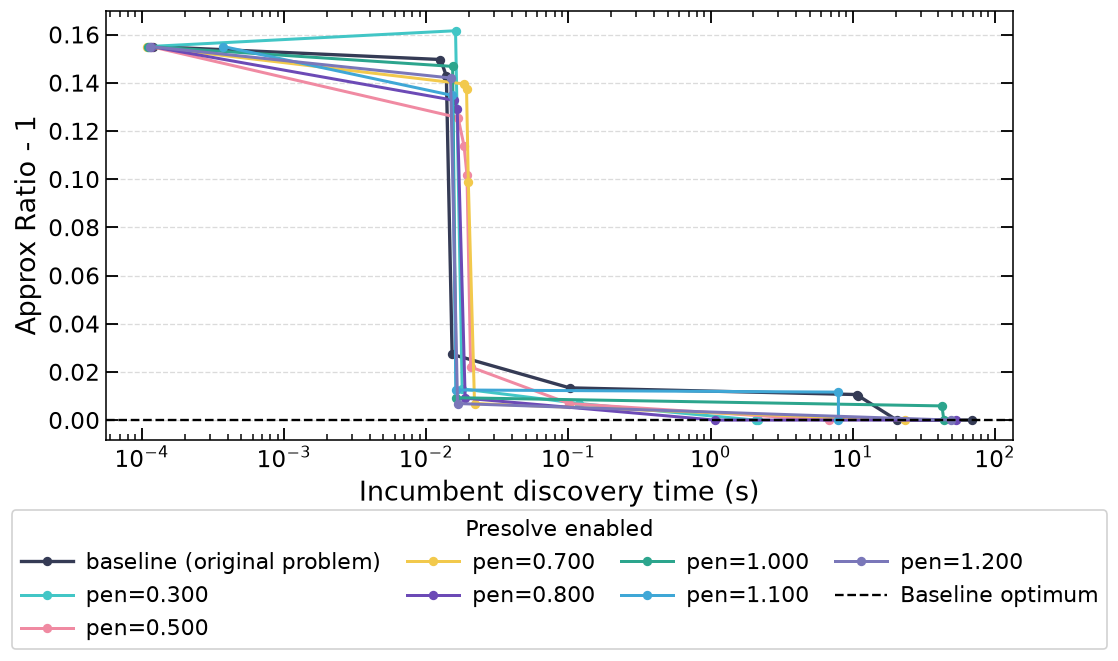}
    \caption{Incumbent solution trajectory for a single illustrative instance at $n = 40$ and QAOA depth $p = 1$ at select values of the penalty parameter $\rho$.
    Each callback solution found while solving the preconditioned problem is re-evaluated using the original objective~\eqref{eq:cut_objective}.}
    \label{fig:trajectory}
\end{figure*}

\subsection{Time to optimal and near-optimal thresholds}
\label{sec:epstime}

Figure~\ref{fig:PAR_eps_0} reports the time required to reach exact-recovery ($\epsilon = 0$) and near optimal ($\epsilon = 0.01$)  thresholds for $n = 40$ and QAOA depth $p = 1$, as a function of the penalty parameter $\rho$, across 50 independently generated instances, with \emph{presolve} enabled and disabled.
The dashed lines show the penalized average time~\eqref{eq:PAR}, denoted PAR in the figure, which applies a time penalty only to the instances that fail to reach the threshold, while successful instances contribute their observed time unchanged.
We fix $n = 40$ and $p = 1$ because this is the largest problem size and the shallowest circuit depth studied and provides a conservative estimate.

We find a speed-reliability tradeoff under both \emph{presolve} configurations.
Smaller penalty values give faster successful runs but more failures, and the failure count drops sharply as $\rho$ increases before leveling off at a small residual number of instances that never fully disappears across the tested range. 
This residual count is not a limit of solving time, since every solve in this dataset terminates with \texttt{Gurobi} reporting a certified optimal solution.
A failure to reach the $\epsilon = 0$ threshold therefore means that the preconditioned problem never finds the optimal solution to the original problem as an incumbent solution.

We observe that among successful instances, the mean time to exact recovery ($\epsilon = 0$) rises with $\rho$.
Since the number of successful instances grows with $\rho$, the rise could reflect more instances joining the pool.
In addition to exact recovery, we also observe that for a threshold of $\epsilon=0.01$, all 50 instances succeed at every value of $\rho$, and yet the mean time still grows with $\rho$.

Disabling \emph{presolve} affects the two problems very differently. The baseline time to the $\epsilon = 0$ threshold (indicated by the horizontal line) increases sharply once \emph{presolve} is disabled, while the time to that same threshold for the preconditioned problem stays close to its \emph{presolve}-enabled values across every tested $\rho$.
We also observe that disabling \emph{presolve} affects the PAR curve for the preconditioned problem, specific to exact recovery.
This follows from the fact that PAR adds the baseline time to exact recovery as the time penalty for every unsuccessful instance (see Section~\ref{sec:metrics}), shifting the curve further up.

At the near-optimal threshold $\epsilon = 0.01$, every instance succeeds at every tested $\rho$ under both \emph{presolve} settings, and the failure-driven sensitivity to $\rho$ disappears with it.
The aggregate scaling analysis of Section~\ref{sec:scaling} evaluates whether this benefit persists across all $n$ and $p$.

\begin{figure*}[t]
    \centering
    \includegraphics[width=\linewidth]{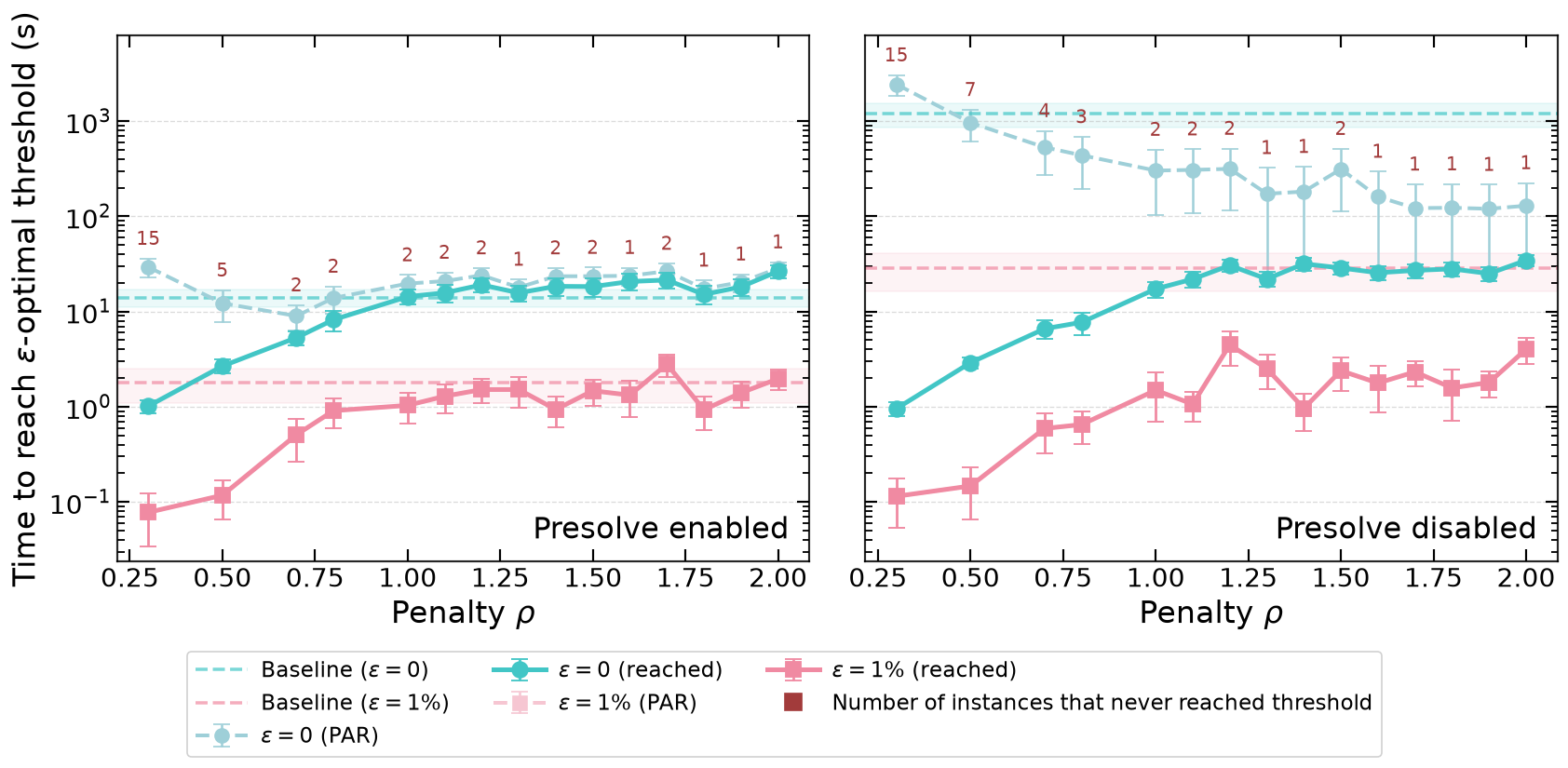}
    \caption{Time to the $\epsilon$-optimal threshold for $n=40$ and QAOA depth $p=1$, as a function of penalty $\rho\in[0.3,2.0]$, over 50 independent instances, with \emph{presolve} enabled (left) and disabled (right).
    Solid markers show the mean time to the $\epsilon$-optimal threshold over successful instances, at $\epsilon=0$ and $\epsilon=0.01$, and dashed horizontal bands show the corresponding classical baseline mean $\pm$ standard error.
    Dashed lines with lighter markers show the penalized average time~\eqref{eq:PAR}, which, for instances that never reach the threshold, adds the baseline runtime to the instance's own runtime. Annotations give the number of such instances at each $\rho$.
    All instances reach the $\epsilon=0.01$ threshold.}
    \label{fig:PAR_eps_0}
\end{figure*}

\subsection{Scaling with problem size}
\label{sec:scaling}
We now study how the time to the $\epsilon$-optimal threshold at $\epsilon=0.01$, corresponding to a solution within $1\%$ of the baseline optimum, scales with problem size.
Figure~\ref{fig:scaling_quantum} shows this threshold time as a function of $n$ for the original and preconditioned problems at $p=1,2,3$ and in the $p\rightarrow\infty$ limit, with \emph{presolve} enabled and disabled.
For each instance, we use an oracle strategy to select the penalty parameter $\rho$, among those tested, that minimizes the time to the $\epsilon$-optimal threshold for that instance, and we apply this rule at every problem size and under both \emph{presolve} settings. Appendix~\ref{sec:penalty_strategy} compares this against other selection strategies for $\rho$ such as a globally fixed $\rho$ and an out-of-sample cross-validated $\rho$.
It also reports how the fitted scaling changes with each strategy.

As shown in Figure~\ref{fig:scaling_quantum}, quantum preconditioning reduces the fitted exponential base relative to the classical baseline across the tested range, under both \emph{presolve} settings.
We observe that when using the oracle selection strategy for $\rho$, the fitted bases for $p=1$ through $p=3$ are close to one another and that most of this improvement over the baseline is captured at $p=1$.
At $n=40$, the preconditioned problem at $p=1$ reaches the near-optimal threshold roughly two orders of magnitude sooner than the baseline.
Note that the $p\rightarrow\infty$ limit is the fastest, since its correlation matrix is trivial and built directly from the optimal solution, and we include it as a reference to measure how much of that gap the finite depths close.

Disabling \emph{presolve} steepens the fitted baseline scaling far more than the preconditioned scaling, and the gap between the baseline and preconditioned curves widens further.
These results portray a similar asymmetry as observed in Section~\ref{sec:epstime}

\begin{figure*}[t]
    \centering
    \includegraphics[width=\linewidth]{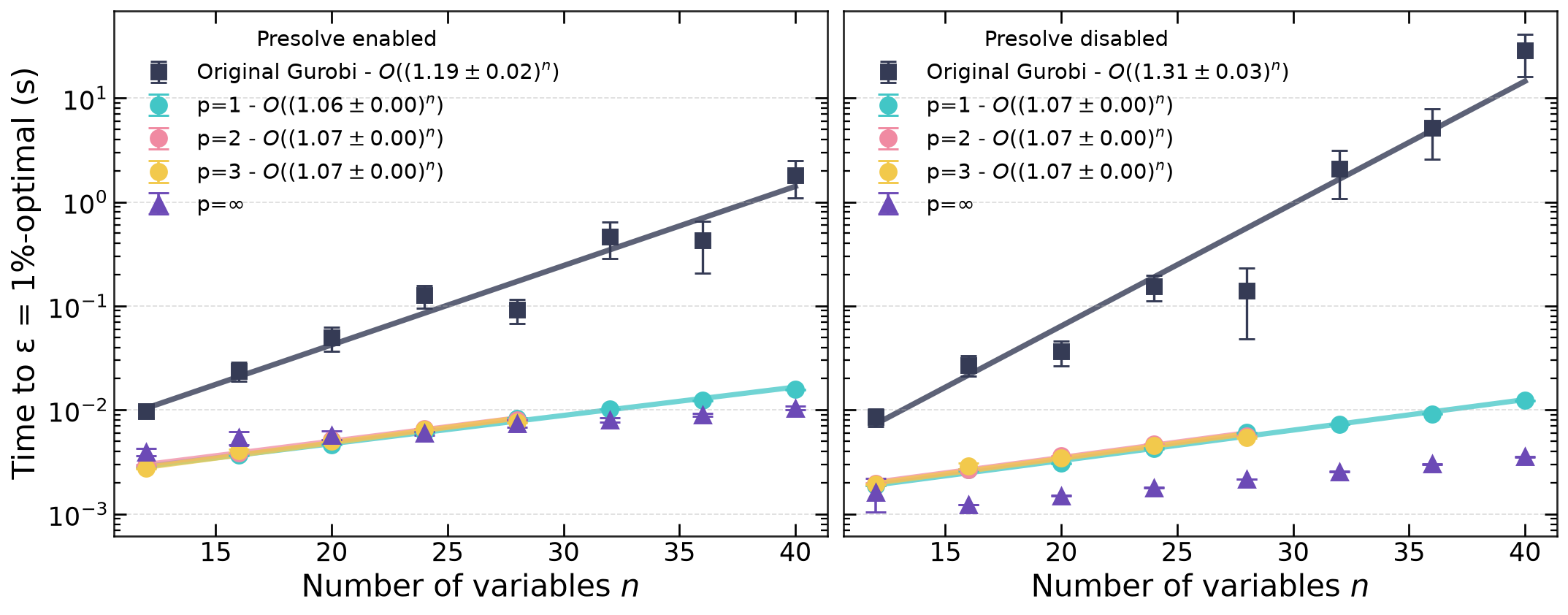}
    \caption{Scaling of the time to the $\epsilon$-optimal threshold at $\epsilon=0.01$ as a function of $n$, with \emph{presolve} enabled (left) and disabled (right).
    The original problem is compared with preconditioned problems at $p=1,2,3$ and at the $p\rightarrow\infty$ limit.
    Each finite-depth curve uses a per-instance oracle penalty, the value of $\rho$ (among those tested) that minimizes the threshold time for that instance.
    Solid lines are exponential fits $\bar t(n)=s a^n$.
    Error bars show the standard error of the mean over 50 instances.}
    \label{fig:scaling_quantum}
\end{figure*}

\subsection{Solution quality distribution across instances}
\label{sec:quality-distribution}
To complement the time-resolved metrics above, we examine the distribution of solution quality achieved by the preconditioned formulations across the full set of tested instances.
As discussed in Section~\ref{sec:callback}, the preconditioned and original objectives coincide exactly only in the $p \to \infty$ limit, so the preconditioned problem's exact minimizer under $Z$ need not equal the baseline optimum under $W$ at finite $p$.
We quantify how often this gap appears in practice and how it behaves as the problem size and circuit depth change.
Figure~\ref{fig:performance_distribution} reports the performance ratio
\begin{equation}
    \pi = \frac{f^\star}{f^\prime} \times 100\%,
\end{equation}
where $f^\prime$ is the best solution found by the preconditioned solve, evaluated under the original objective, and $f^\star$ is the baseline optimum.
Since this is a minimization problem, $f^\prime\ge f^\star$ and $\pi\le100\%$, with $\pi=100\%$ marking exact recovery and lower values marking a suboptimal solution relative to the baseline.

As shown in Figure~\ref{fig:performance_distribution}, exact recovery is common but not universal, and it weakens with depth.
At $p=1$, every tested $n$ recovers the baseline optimum in at least 96\% of instances. At $p=2$ and $p=3$, several sizes still recover it in every instance, but the worst case drops to 94\% at $p=2$ ($n=20$ and $n=28$) and to 88\% at $p=3$ ($n=28$).
The instances that do not recover the baseline optimum still land above 99\% solution quality in every case tested.
We also note that counterintuitively, the widest distribution of solution quality is observed at $p=3$, which also reports the lowest solution quality in the dataset.

\begin{figure*}[t]
    \centering
    \includegraphics[width=0.8\linewidth]{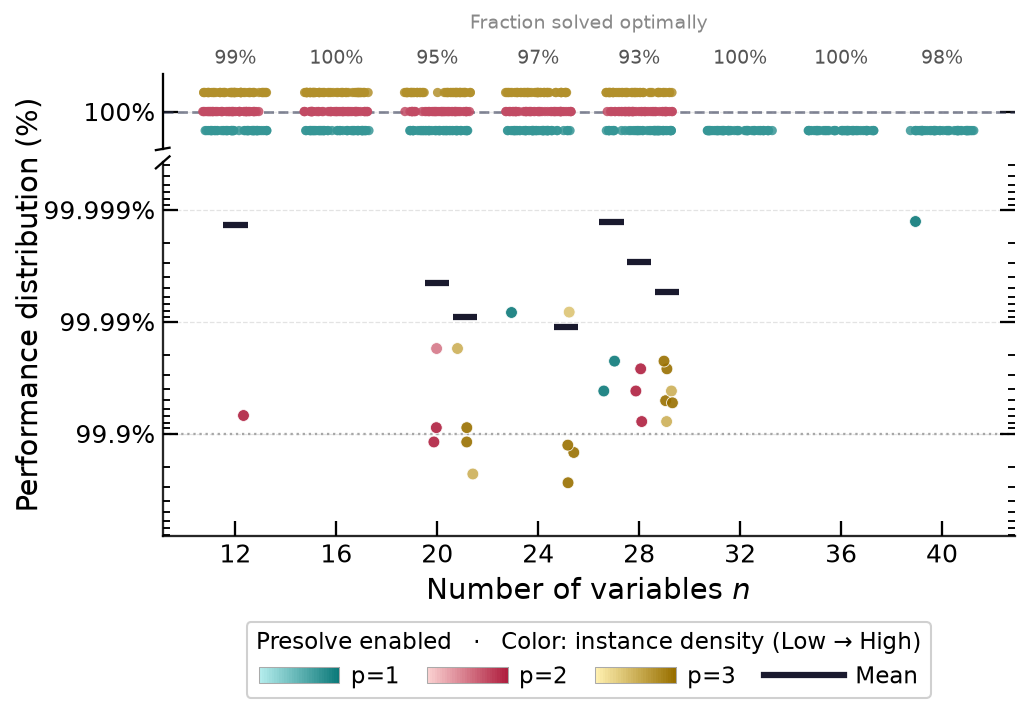}
    \caption{Distribution of solution quality achieved by preconditioned formulations relative to the baseline \texttt{Gurobi} solve, as a function of the number of variables $n$.
    Results are shown for QAOA depths $p = 1, 2, 3$ with 50 problem instances per $n$.
    Points are colored by local instance density within each strip.
    Horizontal bars indicate the mean performance across instances.
    The fraction of instances achieving $\pi = 100\%$ exactly is annotated at the top of the figure.}
    \label{fig:performance_distribution}
\end{figure*}

\section{Discussion}
\label{sec:disc}
We set out to test whether QAOA-derived correlations can serve as external structural information for an exact MIP solver on constrained optimization problems, extending the quantum preconditioning framework of Dupont et al.~\cite{9prw-684p} for unconstrained optimization problems.
Across our results (Section~\ref{sec:results}), we observe that preconditioning helps \texttt{Gurobi} reach an incumbent solution faster than solving the original problem directly, provided a suitable penalty parameter $\rho$ is chosen.

Note that the $\rho$ never enters the \texttt{Gurobi} solve, where the balance constraint is always imposed as a hard constraint (Equation~\eqref{eq:balance_constraint}). 
It acts only inside the QAOA cost Hamiltonian, as a soft-penalty multiplier that weights the balance term against the cut term (Equation~\eqref{eq:cut_objective}).
Therefore, $\rho$ shapes the solve indirectly, through the correlation matrix during preconditioning.
This lets the preconditioned solve recover the original optimum $z^\star$, or a solution within a tolerance of $\epsilon$ relative to the original problem's optimum, as an incumbent faster than the direct solve.
Note that there is no guarantee this happens, which is why we treat $\epsilon$-optimal recovery as an event to be measured. 
The callback trajectory in Figure~\ref{fig:trajectory} shows how this recovery unfolds over the solve across different values of $\rho$.
Near-optimal incumbent discovery is trivial and almost insensitive to $\rho$, since many feasible solutions fall within a percent of the baseline optimum.
Exact recovery, instead, requires the unique optimizer $z^\star$, which the preconditioned solve sometimes fails to recover at small values of $\rho$.
However, as $\rho$ is increased, the fraction of instances where the preconditioned solve recovers the unique optimizer increases.
As a result, the time to recover $z^\star$ as an incumbent solution stays sensitive to $\rho$, which is the spread seen in Figure~\ref{fig:trajectory}. 

Next, we proceed to understand the effect of the \emph{presolve} studies in Figures~\ref{fig:PAR_eps_0} and~\ref{fig:scaling_quantum}.
We observe that when \emph{presolve} is disabled, the baseline solve has a sharp increase in the time to the $\epsilon$-optimal threshold, while the preconditioned solve is barely affected by it. 
In the problem instances we inspected, the root relaxation bound is identical whether \emph{presolve} is enabled or disabled.
Therefore, the increase in time to the $\epsilon$-optimal threshold comes from how \emph{presolve} contributes to the solution search.
We find that this contribution is mainly from the addition of cutting planes, such as implied-bound and cover cuts, which reduce the size of the branch-and-bound tree~\cite{Andersen1995, Gamrath2015presolving}.
The preconditioned solve is largely unaffected by their absence, since the correlation matrix already provides comparable guidance to the solver in finding good incumbent solutions.
 
The scaling results clearly demonstrate the potential of preconditioning.
Across all tested depths, preconditioning lowers the fitted exponential base relative to the baseline.
Moreover, under the oracle selection strategy for $\rho$ shown in Figure~\ref{fig:scaling_quantum}, the fitted preconditioned curves for the three depths cluster closely over the range of problem sizes tested.
We also observe that most of the scaling improvement is already captured at $p=1$ over this range.
Whether this clustering persists at larger problem sizes is not evident from the sizes we consider in this study.
Since deeper circuits carry richer correlations, we expect the curves at different depths to separate at large utility-scale problem sizes that matter in practice.

We observe the reduction in the base of the exponential fits for the preconditioned problem because the QAOA correlations encoded within the correlation matrix $Z$ carry real structure from the original problem even at $p=1$. 
This allows \texttt{Gurobi} to make better branching decisions and discover high-quality incumbents earlier as compared to directly solving the original problem.
At the sizes reachable by exact emulation, this head start is already visible, with $p=1$ reaching the $\epsilon$-optimal threshold of 0.01 roughly two orders of magnitude sooner than the original problem at $n=40$.
This benefit is expected to widen for larger, utility-scale instances where branch-and-bound faces severe challenges, and where finding an efficient path to high-quality incumbent solutions is critical.

We propose several directions worth exploring.
First, the correlations utilized in this work come from classical emulation of the QAOA circuit.
Generating the correlations with the help of quantum hardware, as done for unconstrained problems in~\cite{9prw-684p}, is therefore a natural next step and would highlight the effect of hardware noise and finite sampling on the quality of preconditioning.
Second, comparing quantum preconditioning with strong classical preconditioners, such as deriving correlations between decision variables using simulated annealing, would provide a meaningful performance comparison with an analogous classical approach.
Third, moving from dense to sparse graph families and to more applied constrained problems with multiple constraints, such as the ones provided in QOBLIB~\cite{Koch2026QOBLIB}, would test the versatility of using quantum preconditioning with exact MIP solvers.
We note that this work presents a first step toward quantum preconditioning for constrained optimization, and problems with several constraints are an interesting direction to push next.
Each additional constraint enters the QAOA ansatz as its own penalty term, so the single penalty parameter studied here becomes a set of penalties whose interactions must be tuned together.
Since a poorly chosen penalty parameter already weakens the correlations in a single-constraint case, it could be important to develop systematic ways to set the penalty parameter(s) in order to extend the method to more problems with interesting applications.
An alternative approach is to enforce the constraints within the QAOA ansatz itself, using a constraint-preserving mixing Hamiltonian~\cite{Hadfield2019QAOA, Fuchs2022} rather than a soft penalty. 
This would restrict the correlation matrix to feasible configurations only, and it would be interesting to benchmark how much faster the MIP solver runs on a preconditioned problem constructed this way.
Such mixers guarantee that the QAOA evolution remains within the feasible subspace in the adiabatic limit, but they can be more difficult to implement than standard mixers because the required constraint-preserving dynamics may involve more complex multi-qubit operations and deeper circuit decompositions.
On quantum hardware, this additional circuit complexity can increase exposure to noise and potentially offset the benefit offered by this method of enforcing constraints.
The resulting tradeoff between constraint preservation, circuit complexity, and noise robustness is therefore an important question to investigate.

\paragraph{Practical guidance}
In practice, preconditioning is worth applying when the solver time budget is the binding constraint, which is the regime the scaling results point toward.
For the dense graph instances studied here, $\rho$ can be tuned to satisfy constraints and improve solution quality.
For example, a value at or above the per-instance threshold near $\rho^\star\approx0.35$ gives reliable exact recovery, while smaller values occasionally solve individual instances faster at a higher risk of failure.
Beyond tuning, another practical consideration is the cost paid before the MIP solve, primarily the computation of the correlation matrix.
However, it is anticipated that as the problem size grows to utility-scale, there is a definite window (refer to Equation~\ref{eq:preconditioning_time_model}) where the cost of actually solving the problem directly exceeds the combined cost of computing the correlation matrix and solving the preconditioned problem, making preconditioning a useful strategy at the scales that matter most in practice.

\section{Conclusions}
\label{sec:conclusions}
In this work, we introduced a quantum preconditioning framework for constrained optimization problems.
The preconditioner uses QAOA to generate two-point correlations which are subsequently used to build a modified optimization objective for an exact MIP solver while retaining the hard constraints.
For the problem instances studied here, a suitable choice of the penalty parameter $\rho$ lets the preconditioned problem reach a solution within $1\%$ of the original optimum faster than solving the original problem directly.
We attribute this improvement to the earlier discovery of high-quality incumbents during \texttt{Gurobi}'s branch-and-bound search.
These findings support a hybrid optimization framework in which quantum algorithms act as structure-learning pre-processors that supply problem-specific guidance to classical exact MIP solvers.

\appendix

\section{Penalty threshold analysis and proofs}
\label{sec:penalty_proofs}

The graph that witnesses the sharpness of Theorem~\ref{thm:det_threshold} has weights exactly in $\{0,2\}$, a probability-zero event under continuous sampling.
The next result shows that randomness does not lower the smallest penalty that guarantees feasibility with probability one, though it does not by itself quantify how frequently failure occurs below that guarantee.

\begin{theorem}[Threshold for an almost-sure guarantee]
\label{thm:as_threshold}
Let the edge weights $w_{ij}$ be drawn i.i.d.\ from $U[0, 2]$.
For every $\rho < n/4$, the event that every minimizer of~\eqref{eq:penalized_spin} is infeasible has positive probability.
At $\rho = n/4$, every minimizer of~\eqref{eq:penalized_spin} is feasible with probability one.
Hence $\rho = n/4$ is the sharp threshold for a uniform almost-sure exactness guarantee under i.i.d.\ $U[0,2]$ weights.
\end{theorem}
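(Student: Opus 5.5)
The plan has two halves, one for each part of Theorem~\ref{thm:as_threshold}. The positive-probability part for $\rho<n/4$ will follow from the sharpness witness of Theorem~\ref{thm:det_threshold} by a continuity (open-set) argument. The almost-sure part at $\rho=n/4$ will follow from a single-vertex repair argument whose only tight cases force edge weights onto the boundary values $\{0,2\}$, which is a null event.

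\emph{Case $\rho<n/4$.} Fix such a $\rho$. Let $w^{0}\in\{0,2\}^{E}$ be the witness supplied by the sharpness clause of Theorem~\ref{thm:det_threshold}.
\begin{itemize}
\item Since every minimizer of $C$ under $w^0$ is infeasible, and there are finitely many configurations, the minimum of $C$ over infeasible $\mathbf z$ is strictly smaller than its minimum over balanced $\mathbf z$.
\item For each fixed $\mathbf z$, $C(\mathbf z)$ is affine in $w$. The map $w\mapsto \min_{\text{infeasible}}C-\min_{\text{feasible}}C$ is therefore continuous, and it is negative at $w^0$.
\item Hence it stays negative on an open ball $U\ni w^0$. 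On $U$, every minimizer of $C$ is infeasible.
\item The set $U\cap[0,2]^{E}$ has positive Lebesgue measure: $w^0$ is a corner of the cube, but a ball around a corner still meets the cube in a set of positive volume.
\end{itemize}
Under i.i.d.\ $U[0,2]$ weights this event therefore has positive probability.

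\emph{Case $\rho=n/4$.} Take any unbalanced $\mathbf z$. By symmetry, assume the larger side $A$ has $a=n/2+k$ vertices with $k\ge1$, and the smaller side $B$ has $b=n/2-k$ vertices.
\begin{itemize}
\item Moving a vertex $i\in A$ to $B$ changes the cut by $\Delta_i=\sum_{j\in A\setminus i}w_{ij}-\sum_{j\in B}w_{ij}$.
\item It lowers the penalty by $\rho\big[(2k)^2-(2k-2)^2\big]=\rho(8k-4)=n(2k-1)$.
\item Averaging over $i\in A$ gives $\frac1a\sum_i\Delta_i=\frac{2W(A)-W(A,B)}{a}$.
\item Using $w\le2$ inside $A$ and $w\ge0$ across the cut, this is at most $\frac{2\cdot a(a-1)}{a}=n+2k-2$.
\end{itemize}
So some $i$ has $\Delta_i\le n+2k-2\le n(2k-1)$, and the move does not increase $C$.

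Strictness is the main point to check.
\begin{itemize}
\item For $k\ge2$ and $n\ge2$, the inequality $n+2k-2<n(2k-1)$ is strict. An improving move therefore always exists.
\item For $k=1$, equality throughout the chain forces every weight inside $A$ to equal $2$ and every weight between $A$ and $B$ to equal $0$.
\item Each such equality pattern, for one of the finitely many sets $A$, is a Lebesgue-null event.
\end{itemize}
Off the union of these null sets, every infeasible $\mathbf z$ has a neighbour with strictly smaller $C$. Hence no infeasible $\mathbf z$ is a minimizer, and almost surely every minimizer is feasible.

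Combining the two cases identifies $n/4$ as the sharp threshold for a uniform almost-sure guarantee. The step I expect to need the most care is the equality analysis at $k=1$. It must show that the averaging bound is attained only on a measure-zero set, rather than merely citing the deterministic bound of Theorem~\ref{thm:det_threshold}, whose proof may use a different repair move.
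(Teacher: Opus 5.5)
Your proof is correct and follows essentially the paper's route. For $\rho<n/4$ you perturb the $\{0,2\}$ sharpness witness into a positive-measure set of weights: you do this with an open-ball continuity argument, where the paper uses an explicit event $\mathcal{E}_\delta$ with $4\rho+\delta(m^2-1)<n-\delta m$. At $\rho=n/4$ you use the same single-vertex repair move, which is strictly improving for imbalance $r\ge 2$ and at $r=1$ can fail to be strict only when weights lie exactly at $2$ or $0$, a null event. Two small remarks: averaging over $i\in A$ is harmless but unnecessary, since $\Delta_i\le 2(a-1)$ already holds for every individual $i$; and your closing worry is moot, because the paper's proof of Theorem~\ref{thm:det_threshold} uses exactly this repair move.
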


Theorem~\ref{thm:as_threshold} rules out the natural conjecture that switching from adversarial to i.i.d.\ weights lowers the required penalty.
The worst-case value $n/4$ remains the sharp threshold almost surely, so the gap to the empirical typical value discussed next is genuinely an average-versus-tail phenomenon rather than an artifact of the worst-case bound.

\paragraph{Empirical per-instance threshold}
\label{prop:typical_threshold}
Define the per-instance critical penalty $\rho^\star := \inf\{\rho > 0 : \text{every minimizer of~\eqref{eq:penalized_spin} is feasible}\}$.
For complete graphs with i.i.d.\ $U[0,2]$ weights, we evaluated $\rho^\star$ by exhaustive enumeration at $n\in\{8,12,16,20\}$ using 50 instances at each size.
Over this finite sample, the medians lie between $0.34$ and $0.36$ and the 99th percentiles between $0.63$ and $0.89$ in spin units.
These observations are numerical and are not asserted as an average-case theorem.
The distribution is plotted in Figure~\ref{fig:penalties_per_instance}.

The ``Max'' curve in Figure~\ref{fig:penalties_per_instance} is the largest $\rho^\star$ among the 50 instances at each $n$, not the essential supremum over all admissible weights, and the calculation uses the same seed-indexed instance family as the MIP scaling experiments.

The behavior of a typical instance can be further illuminated by computing the expected effect of a single repair step, which formalizes the intuition that for large enough $\rho$ the penalty term dominates and pushes any imbalanced configuration toward feasibility.

\begin{proposition}[Expected repair cost]
\label{prop:expected_repair}
Let $|S(\mathbf{z})| = m + r$ with $r \geq 1$, so that $\mathbf{z}$ is infeasible with imbalance $2r$.
Let the edge weights be i.i.d.\ $U[0,2]$.
A single repair step that moves one vertex out of $S$ produces an expected change in cut value $\mathbb{E}[\Delta H] = 2r - 1$ and an expected change in the penalized objective
\begin{equation}
    \mathbb{E}[\Delta C] = (2r-1)(1 - 4\rho),
    \label{eq:expected_repair}
\end{equation}
which is strictly negative for every $\rho > 1/4$.
\end{proposition}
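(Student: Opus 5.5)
We treat $\mathbf z$ as a fixed configuration chosen independently of the weights, on the complete graph with $n=2m$ vertices. We read $S(\mathbf z)=\{i: z_i=+1\}$, so $|S|=m+r$ and $|V\setminus S|=m-r$. The repair step moves a fixed vertex $v\in S$ to the other side, i.e.\ it flips $z_v$ from $+1$ to $-1$; the choice of $v$ must also not depend on the weights. The plan is to split $\Delta C=\Delta H+\rho\,\Delta\bigl[(\sum_i z_i)^2\bigr]$. The penalty part is deterministic. The cut part is a signed sum of edge weights, so its expectation follows from linearity.

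\emph{Cut term.} Only edges incident to $v$ change status.
\begin{itemize}
\item Each edge $(v,u)$ with $u\in S\setminus\{v\}$ is uncut before the move and cut after it, contributing $+w_{vu}$.
\item Each edge $(v,u)$ with $u\notin S$ is cut before the move and uncut after it, contributing $-w_{vu}$.
\end{itemize}
Hence
\begin{equation*}
\Delta H=\sum_{u\in S\setminus\{v\}} w_{vu}-\sum_{u\notin S} w_{vu}.
\end{equation*}
Each weight has $\mathbb E[w_{vu}]=1$ under $U[0,2]$. Linearity then gives $\mathbb E[\Delta H]=(m+r-1)-(m-r)=2r-1$. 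Independence of the weights is not even needed here, only their common mean.

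\emph{Penalty term.} Before the move $\sum_i z_i=(m+r)-(m-r)=2r$. After flipping $z_v$ it equals $2r-2$. The penalty therefore changes by $\rho\bigl[4(r-1)^2-4r^2\bigr]=-4\rho(2r-1)$, deterministically.

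\emph{Combining.} Adding the two parts gives $\mathbb E[\Delta C]=(2r-1)-4\rho(2r-1)=(2r-1)(1-4\rho)$, which is \eqref{eq:expected_repair}. Since $r\ge1$ we have $2r-1\ge1>0$, so the sign of $\mathbb E[\Delta C]$ is the sign of $1-4\rho$. It is therefore strictly negative exactly when $\rho>1/4$.

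The only delicate point is not computational but conceptual. The expectation is valid only when $\mathbf z$ and $v$ are fixed before the weights are drawn. If $\mathbf z$ were a minimizer, or $v$ were chosen greedily from the weights, the conditional distribution of the $w_{vu}$ would be biased. In that case the identity $\mathbb E[w_{vu}]=1$ would no longer apply term by term. We would state this independence assumption explicitly. This also explains why the proposition speaks only to ``typical'' behavior and is consistent with the worst-case threshold $n/4$ of Theorems~\ref{thm:det_threshold} and~\ref{thm:as_threshold}.
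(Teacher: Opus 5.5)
Your proof is correct and follows the paper's own argument essentially step for step. It uses the same signed-sum expression for $\Delta H$, linearity of expectation with mean-one weights to get $2r-1$, and the deterministic penalty reduction $4\rho(2r-1)$; your independence caveat matches the paper's Remark~\ref{rem:average_case_caveat}, and the paper additionally treats the mirror case $|S| = m-r$, which the statement as given does not require.
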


\begin{remark}
\label{rem:average_case_caveat}
Proposition~\ref{prop:expected_repair} concerns the expected change for a single repair step from a \emph{fixed} configuration, not the behavior of the penalized minimizer.
Two cautions follow.
First, a negative expected change does not imply that every realization improves, since for $r = 1$ the standard deviation of $\Delta H$ grows as $\Theta(\sqrt{n})$, so the repair step does not concentrate at any fixed $\rho$.
Second, exactness concerns the configuration selected by the penalized optimizer, whose choice is correlated with the weights, so the independence used in~\eqref{eq:expected_repair} does not apply in that setting.
Proposition~\ref{prop:expected_repair} therefore motivates, but does not establish, a small typical threshold.
Figure~\ref{fig:penalties_per_instance} provides the complementary empirical characterization.
\end{remark}

These results together characterize three distinct regimes of the penalty parameter.
The worst-case and almost-sure thresholds of Theorems~\ref{thm:det_threshold} and~\ref{thm:as_threshold} scale as $n/4$, reaching values of $3$ and $10$ at $n = 12$ and $n = 40$ respectively.
The expected single-step repair of Proposition~\ref{prop:expected_repair} is favorable for any $\rho > 0.25$, independent of $n$.
The empirical per-instance study has medians near $0.35$ and 99th percentiles below $0.90$ over the tested sizes.
These summaries answer different questions about the same quantity.
The worst-case and almost-sure thresholds describe the essential supremum of $\rho^\star$.
The expected-repair threshold describes a favorable-in-expectation boundary.
The empirical distribution describes the bulk.
Together they establish $\rho \in [0,2]$ as a well-motivated study range because it covers nearly all of the sampled per-instance distribution while remaining far below the worst-case threshold.
The sensitivity to $\rho$ observed in Section~\ref{sec:results} can therefore be understood as a consequence of the gap between these regimes and the solver's need for a sufficiently strong correlation signal.

We prove Theorems~\ref{thm:det_threshold} and~\ref{thm:as_threshold} and Proposition~\ref{prop:expected_repair}, then describe the exhaustive per-instance computation.
Throughout, $n = |V|$ is even with $n = 2m$, and we work with spin variables $\mathbf{z} \in \{-1,+1\}^n$ and the penalized objective $C(\mathbf{z})$ defined in~\eqref{eq:penalized_spin}.
We denote $S(\mathbf{z}) := \{i : z_i = +1\}$, write $k = |S|$, and let $R_i^-(\mathbf{z})$ denote the configuration obtained by flipping $z_i$ from $+1$ to $-1$ for $i \in S(\mathbf{z})$, and $R_i^+(\mathbf{z})$ the flip of $z_i$ from $-1$ to $+1$ for $i \notin S(\mathbf{z})$.

\begin{remark}[Relationship to binary variables]
\label{rem:binary}
Setting $x_i = (z_i+1)/2 \in \{0,1\}$, one has $H(\mathbf{z}) = C(\mathbf{x}) := \sum_{\{i,j\} \in E}w_{ij}(x_i + x_j - 2x_ix_j)$ and $(\sum_i z_i)^2 = 4(\sum_i x_i - m)^2$.
Hence $C(\mathbf{z}) = C(\mathbf{x}) + \rho_x(\sum_i x_i - m)^2$ with $\rho_x = 4\rho$.
Every result below stated in spin units $\rho$ has an equivalent binary-variable statement with threshold $\rho_x = 4\rho$.
\end{remark}

\begin{lemma}[Complement symmetry]
\label{lem:symmetry}
For $\bar{\mathbf{z}} := -\mathbf{z}$, one has $C(\bar{\mathbf{z}}) = C(\mathbf{z})$ for every $\rho \geq 0$.
Minimizers therefore occur in complementary pairs and no minimizer is unique.
\end{lemma}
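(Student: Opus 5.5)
The plan is to verify the identity directly from the definition of $C$ in~\eqref{eq:penalized_spin}, treating the cut term and the penalty term separately. Then we deduce the pairing of minimizers from the fact that $\mathbf{z}\mapsto-\mathbf{z}$ is a fixed-point-free involution on $\{-1,+1\}^n$.

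First, for the cut term $H$, every summand depends on $\mathbf{z}$ only through the product $z_iz_j$. Replacing $\mathbf{z}$ by $\bar{\mathbf{z}}=-\mathbf{z}$ gives $\bar z_i\bar z_j=(-z_i)(-z_j)=z_iz_j$, so each summand $\tfrac12 w_{ij}(1-z_iz_j)$ is unchanged and $H(\bar{\mathbf{z}})=H(\mathbf{z})$. Second, the penalty term depends on $\mathbf{z}$ only through $\sum_i z_i$, which changes sign under complementation. Since it enters squared, $\rho(\sum_i \bar z_i)^2=\rho(-\sum_i z_i)^2=\rho(\sum_i z_i)^2$, and this holds for every $\rho\ge 0$; no sign condition on $\rho$ is actually needed. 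Adding the two equalities gives $C(\bar{\mathbf{z}})=C(\mathbf{z})$.

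For the consequence, let $\mathbf{z}$ be any minimizer. Then $C(\bar{\mathbf{z}})=C(\mathbf{z})$ equals the minimum value, so $\bar{\mathbf{z}}$ is also a minimizer. Moreover $\bar{\mathbf{z}}\neq\mathbf{z}$, because each coordinate satisfies $z_i\in\{-1,+1\}$, hence $z_i\neq 0$ and $-z_i\neq z_i$. The map $\mathbf{z}\mapsto\bar{\mathbf{z}}$ is an involution, so it partitions the set of minimizers into disjoint complementary pairs $\{\mathbf{z},-\mathbf{z}\}$. In particular there are always at least two minimizers, so no minimizer is unique. We should also note that complementation preserves feasibility: $\sum_i z_i=0$ if and only if $\sum_i \bar z_i=0$. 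This is worth recording for later use in the threshold proofs.

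There is no genuine obstacle here. The only point needing care is to state explicitly that $\bar{\mathbf{z}}\neq\mathbf{z}$, which holds for $n\ge 1$ and is what makes the non-uniqueness claim strict.
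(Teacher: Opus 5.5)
Your proof is correct and follows the paper's argument: $H$ is invariant because $\bar z_i\bar z_j = z_i z_j$, and the penalty is invariant because $\sum_i \bar z_i = -\sum_i z_i$ enters squared. You also check explicitly that $\bar{\mathbf{z}}\neq\mathbf{z}$, since every $z_i\neq 0$, which fills in the step the paper leaves implicit for the claim that no minimizer is unique.
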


\begin{proof}
Since $\bar{z}_i\bar{z}_j = z_iz_j$, the cut value $H$ is unchanged.
Since $\sum_i \bar{z}_i = -\sum_i z_i$, the penalty is unchanged.
\end{proof}

\begin{proof}[Proof of Theorem~\ref{thm:det_threshold}]
On feasible points the penalty vanishes, so $\min_{\mathbf{z}} C \leq H^\star$, where $H^\star$ is the constrained optimum.
Suppose $\tilde{\mathbf{z}}$ minimizes $C$ and is infeasible, with $k = |S(\tilde{\mathbf{z}})| = m+r$ for some $r \geq 1$.
Pick $i \in S(\tilde{\mathbf{z}})$ and let $\mathbf{z}' = R_i^-(\tilde{\mathbf{z}})$.
Flipping $z_i$ from $+1$ to $-1$ converts its edges to $S \setminus \{i\}$ from non-cut to cut and its edges to $V \setminus S$ from cut to non-cut, so
\begin{equation}
    H(\mathbf{z}') - H(\tilde{\mathbf{z}})
    = \sum_{j \in S \setminus \{i\}} w_{ij}
    - \sum_{j \notin S} w_{ij}
    \leq 2(k-1) = n + 2r - 2.
    \label{eq:cut_change_app}
\end{equation}
The penalty changes from $4\rho r^2$ to $4\rho(r-1)^2$, a reduction of $4\rho(2r-1)$, giving
\begin{equation}
    C(\mathbf{z}') - C(\tilde{\mathbf{z}})
    \leq n + 2r - 2 - 4\rho(2r-1).
    \label{eq:total_change_app}
\end{equation}
For $\rho > n/4$ and $r \geq 1$, the right side of~\eqref{eq:total_change_app} is strictly negative, contradicting the optimality of $\tilde{\mathbf{z}}$.
If $k=m-r$, choose $i\notin S$ and apply $R_i^+$.
The analogous cut change has the same upper bound because there are $m+r-1$ vertices on the current majority side and $m-r$ on the minority side.
Hence every minimizer of $C$ is feasible, and on the feasible set $C = H$, so the minimizers of~\eqref{eq:penalized_spin} are exactly those of the original constrained problem.

\emph{Sharpness.}
Partition $V = A \cup B$ with $|A| = m-1$, $|B| = m+1$, set $w_{ij} = 2$ for all distinct $i,j \in B$, and $w_{ij} = 0$ otherwise.
The infeasible configuration $\mathbf{z}^A$ with $S = A$ satisfies $H(\mathbf{z}^A) = 0$ and penalty $4\rho \cdot 1 = 4\rho$, giving $C(\mathbf{z}^A) = 4\rho$.
Any feasible configuration has at least one vertex from $B$ in $S$, contributing cut weight at least $2m = n$, so $H^\star = n$.
For $\rho < n/4$, $C(\mathbf{z}^A) = 4\rho < n = H^\star$, so the infeasible $\mathbf{z}^A$ beats every feasible point.
\end{proof}

\begin{proof}[Proof of Theorem~\ref{thm:as_threshold}]
\emph{Positive-probability failure below $n/4$.}
Fix $\rho < n/4$ and choose $\delta > 0$ small enough that $4\rho + \delta(m^2-1) < n - \delta m$.
On the event $\mathcal{E}_\delta$ that all $A$--$B$ weights lie in $[0, \delta]$ and all $B$-internal weights lie in $[2-\delta, 2]$, the infeasible configuration $\mathbf{z}^A$ has $C(\mathbf{z}^A) < H^\star$ by the same estimate as the sharpness argument above.
Since the weights are continuous, $\mathbb{P}(\mathcal{E}_\delta) > 0$.

\emph{Almost-sure exactness at $\rho = n/4$.}
Suppose $\tilde{\mathbf{z}}$ minimizes $H_{n/4}$ and is infeasible.
By~\eqref{eq:total_change_app} at $\rho = n/4$, any imbalance of $r \geq 2$ units can be repaired with a non-positive change in $H_{n/4}$, strict for $r \geq 2$.
Hence a minimizer satisfies $|\sum_i z_i| = 2$ (one unit off balance, i.e., $|k - m| = 1$).
Optimality at $r = 1$ forces equality in~\eqref{eq:cut_change_app}, which requires all weights $w_{ij}$ for $j \in S \setminus \{i\}$ to equal $2$ and all weights $w_{ij}$ for $j \notin S$ to equal $0$.
Under continuous i.i.d.\ weights this event has probability zero, and the union over the finitely many pairs $(S, i)$ is also a null event.
\end{proof}

\subsection{Per-instance critical-penalty computation}
Because the configuration space is finite, the per-instance critical penalty satisfies the exact identity
\begin{equation}
    \rho^\star
    = \max\left\{0,\,
    \max_{\mathbf{z}\ {\rm infeasible}}
    \frac{H^\star - H(\mathbf{z})}{(\sum_i z_i)^2}
    \right\}.
    \label{eq:rho_star_sup}
\end{equation}
For each $n\in\{8,12,16,20\}$, we use the 50 instances with integer seeds $0,\ldots,49$.
For seed $s$, we initialize \texttt{numpy.random.default\_rng($s$)}, draw $n(n-1)/2$ independent values from $U[0,2]$, and assign them to the strict upper triangle in \texttt{NumPy} index order.
We then evaluate all configurations, compute $H^\star$ over the balanced configurations, and apply \eqref{eq:rho_star_sup}.
The resulting distribution is summarized in Figure~\ref{fig:penalties_per_instance}.
The observed near-constant quantiles are empirical and are not a proof of size-independent concentration.

\begin{proof}[Proof of Proposition~\ref{prop:expected_repair}]
With $k = m + r$, the change in cut value upon removing vertex $i \in S$ is
\begin{equation}
    \Delta H = \sum_{j \in S \setminus \{i\}} w_{ij}
             - \sum_{j \notin S} w_{ij}.
\end{equation}
This is a sum of $k - 1 = m + r - 1$ terms of mean $1$ minus $n - k = m - r$ terms of mean $1$, giving $\mathbb{E}[\Delta H] = (m+r-1) - (m-r) = 2r - 1$.
The penalty changes from $\rho(\sum_i z_i)^2 = 4\rho r^2$ to $4\rho(r-1)^2$ deterministically, a reduction of $4\rho(2r-1)$.
Hence
\begin{equation}
    \mathbb{E}[\Delta C]
    = \mathbb{E}[\Delta H] - 4\rho(2r-1)
    = (2r-1) - 4\rho(2r-1)
    = (2r-1)(1 - 4\rho),
\end{equation}
which is strictly negative for $\rho > 1/4$ and any $r \geq 1$.
If $k=m-r$, choose $i\notin S$ and flip it from $-1$ to $+1$.
There are $m+r-1$ other vertices outside $S$ and $m-r$ vertices in $S$, so
\begin{equation}
\mathbb E[\Delta H]=(m+r-1)-(m-r)=2r-1.
\end{equation}
The penalty reduction is again $4\rho(2r-1)$.
Thus \eqref{eq:expected_repair} and the threshold $\rho>1/4$ hold on both sides of balance.
\end{proof}

\section{Penalty Parameter selection strategy and scaling}
\label{sec:penalty_strategy}

The scaling analysis of Section~\ref{sec:scaling} selects, for each QAOA depth $p$ and each instance, the value of $\rho$ that minimizes that instance's own time to the $\epsilon$-optimal threshold, an oracle rule that requires knowing the outcome in advance and serves as an upper bound on what preconditioning can achieve.
This appendix asks how much of that reported benefit survives once $\rho$ has to be chosen without this hindsight, using two distinct selection rules described below.

\paragraph{Global $\rho$ selection}
For each QAOA depth $p$, the global rule selects a single value of $\rho$ that minimizes the mean time to $\epsilon$-optimal threshold for each instance $s \in \mathcal S$ per $n$  averaged across all tested $n$.
This rule requires no instance dependent tuning and could be deployed in practice, choosing $\rho$ once in advance.

\paragraph{Cross-validated $\rho$ selection}
For each instance $s \in \mathcal S$ at problem size $n$, the cross-validated rule selects the value of $\rho$ that minimizes the mean time to $\epsilon$-optimal threshold over the $\mathcal S \setminus s$ instances at the same $n$

Figure~\ref{fig:scaling_panel_presolve_on} reports the resulting time to $\epsilon$-optimal threshold scaling under both rules at the near-optimal tolerance $\epsilon=0.01$ used in Section~\ref{sec:scaling}, with \emph{presolve} enabled (top row) and disabled (bottom row).

Under the two \emph{presolve} settings, we observe that both the strategies give larger fitted bases than the oracle strategy reported in Section~\ref{sec:scaling} at every depth.
However, unlike the oracle result, where the three depths cluster closely together, we observe a clear separation among the curves corresponding to the different depths.
$p=1$ sits much closer to the classical baseline while $p=2$ and $p=3$ pull well ahead of it.
Cross-validated selection improves on global selection at $p=1$ and $p=3$, but global selection is slightly better at $p=2$, so neither rule dominates the other at each depth considered in this study.

Despite this, both the global and cross-validated curves remain below the classical baseline at every depth and in both \emph{presolve} configurations.
We also observe that the qualitative comparison between the two selection strategies is similar whether \emph{presolve} is enabled or disabled, indicating that the \emph{presolve}-dependent asymmetry identified in Section~\ref{sec:epstime} is likewise not an artifact of how $\rho$ is chosen.

\begin{figure*}[!t]
    \centering
    \includegraphics[width=\linewidth]{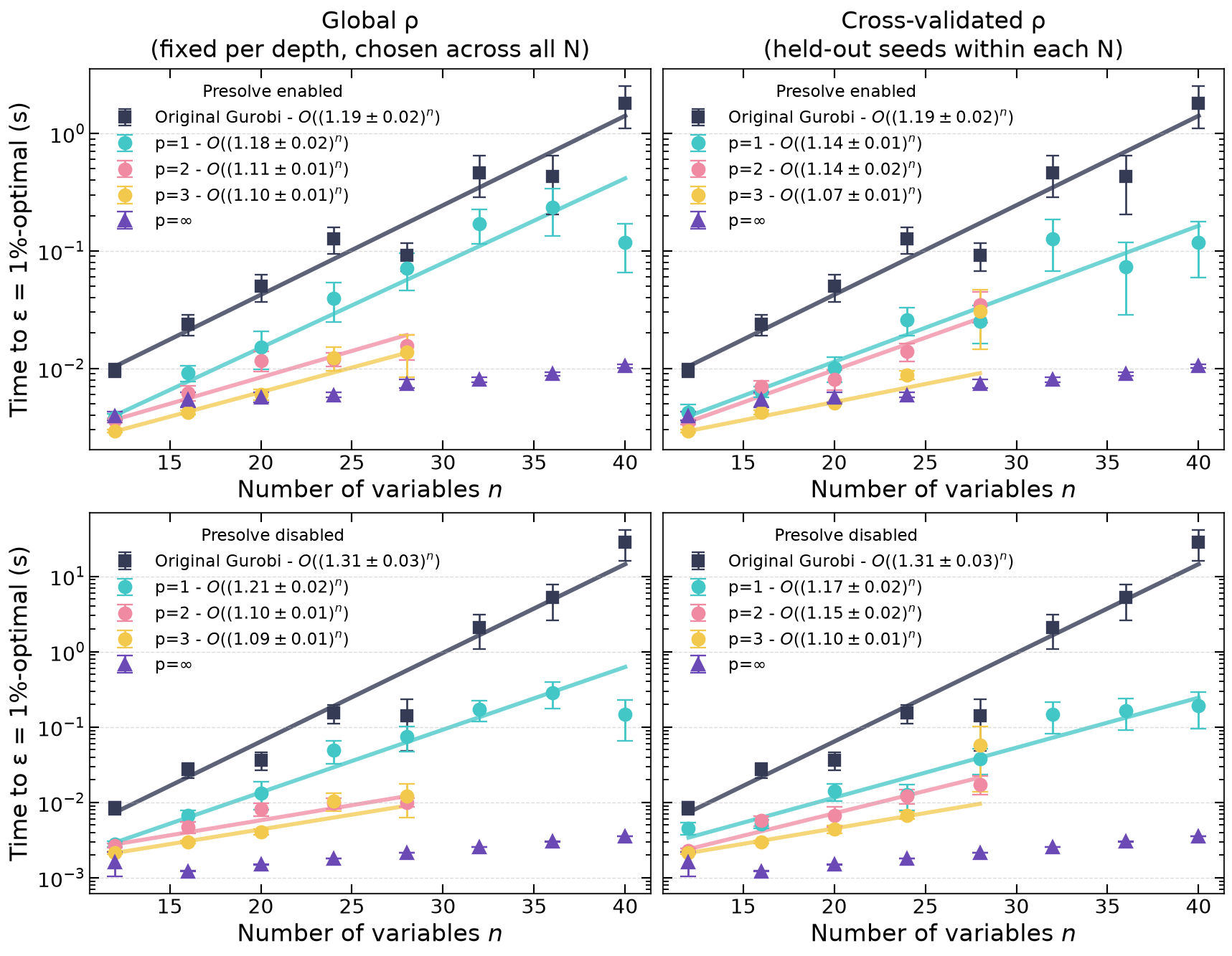}
    \caption{Sensitivity of the time to $\epsilon$-optimal threshold scaling at $\epsilon=0.01$ to penalty selection, with \emph{presolve} enabled (top row) and disabled (bottom row). 
    The left column uses a single global-$\rho$ per depth, chosen across all problem sizes.
    The right column selects $\rho$ by cross-validation on held-out instances within each $n$.
    Points show means with standard errors, and solid lines show exponential fits for all depths}
    \label{fig:scaling_panel_presolve_on}
\end{figure*}

\section{Parameter-transfer validation via QAOA landscapes}
\label{sec:landscape_validation}

Section~\ref{sec:quantum_preconditioning} transfers QAOA angles to problem sizes $n>20$ from angles optimized at a reference size $n_{\rm ref}=20$, using $\beta_\ell(n)=\beta_\ell(n_{\rm ref})$ and $\gamma_\ell(n)=\gamma_\ell(n_{\rm ref})\sqrt{n_{\rm ref}/n}$.
This appendix illustrates that rule for a single representative instance of balanced graph bi-partitioning ($\rho=1.2$, seed $0$), by comparing a 2D slice of the QAOA cost landscape at the reference size against the same slice at each transferred size.
This is a per-instance illustration of whether a transferred point lands in the same basin, not a statistical validation averaged over instances.

At $p=1$, the angle space is two-dimensional, so Figure~\ref{fig:landscape_combined}(a) plots the full landscape $\braket{\hat C(\gamma_1,\beta_1)}$ directly.
At $p=2$ and $p=3$, the angle space is four- and six-dimensional, so a full landscape cannot be plotted; panels (b) and (c) instead show a reproducible 2D slice that varies only the first layer's $(\gamma_1,\beta_1)$ over the same rescaled grid, while every other layer's angle is held fixed at its actual optimized (at $n_{\rm ref}$) or transferred (at $n>n_{\rm ref}$) value.
In all three panels, $\braket{\hat C(\gamma,\beta)}$ is minimized during optimization, so the dark, low-value regions mark favorable angles, and each panel is normalized independently to $[0,1]$ over its own minimum and maximum.
The leftmost panel in each row shows the reference size $n_{\rm ref}=20$, with a star marking the optimized $(\gamma_1,\beta_1)$ found by the procedure detailed in Section~\ref{sec:quantum_preconditioning}.
The remaining panels show the transferred sizes, with the horizontal axis rescaled by $\sqrt{n/n_{\rm ref}}$.
The cross in each panel marks the transferred angle obtained by applying the rescaling rule above to the reference optimum.
At every tested size and at all three depths, the low-cost (dark) region sits at the same rescaled location across panels, and the transferred marker falls inside or immediately adjacent to it, illustrating that the rescaling rule continues to locate a near-optimal region of this slice as $n$ grows for this instance.

\begin{figure*}[!t]
    \centering
    \begin{subfigure}[!t]{\linewidth}
        \centering
        \includegraphics[width=\linewidth]{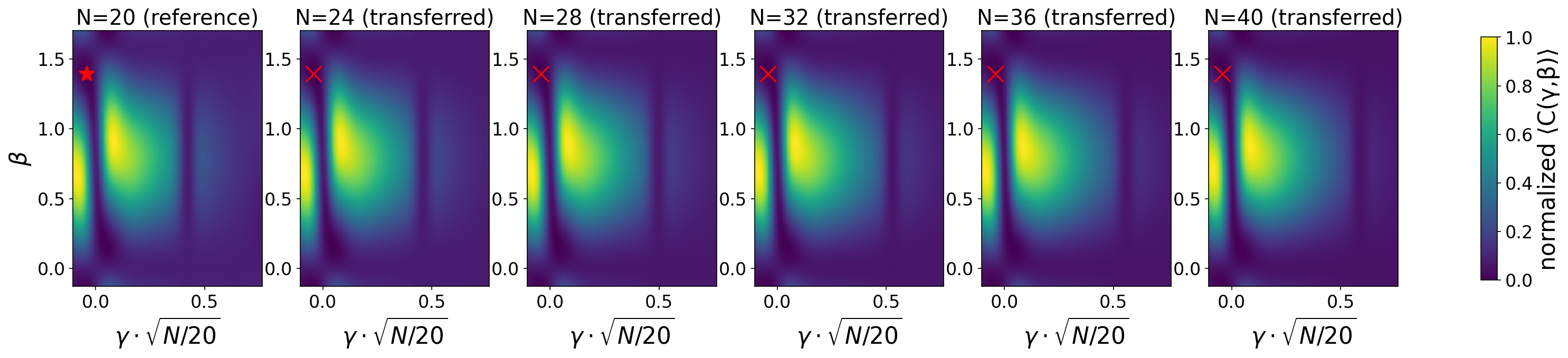}
        \caption{$p=1$: reference size $n_{\rm ref}=20$ and transferred sizes $n=24,28,32,36,40$.}
        \label{fig:landscape_p1}
    \end{subfigure}
    \\[8pt]
    \begin{subfigure}[!t]{0.54\linewidth}
        \centering
        \includegraphics[width=\linewidth]{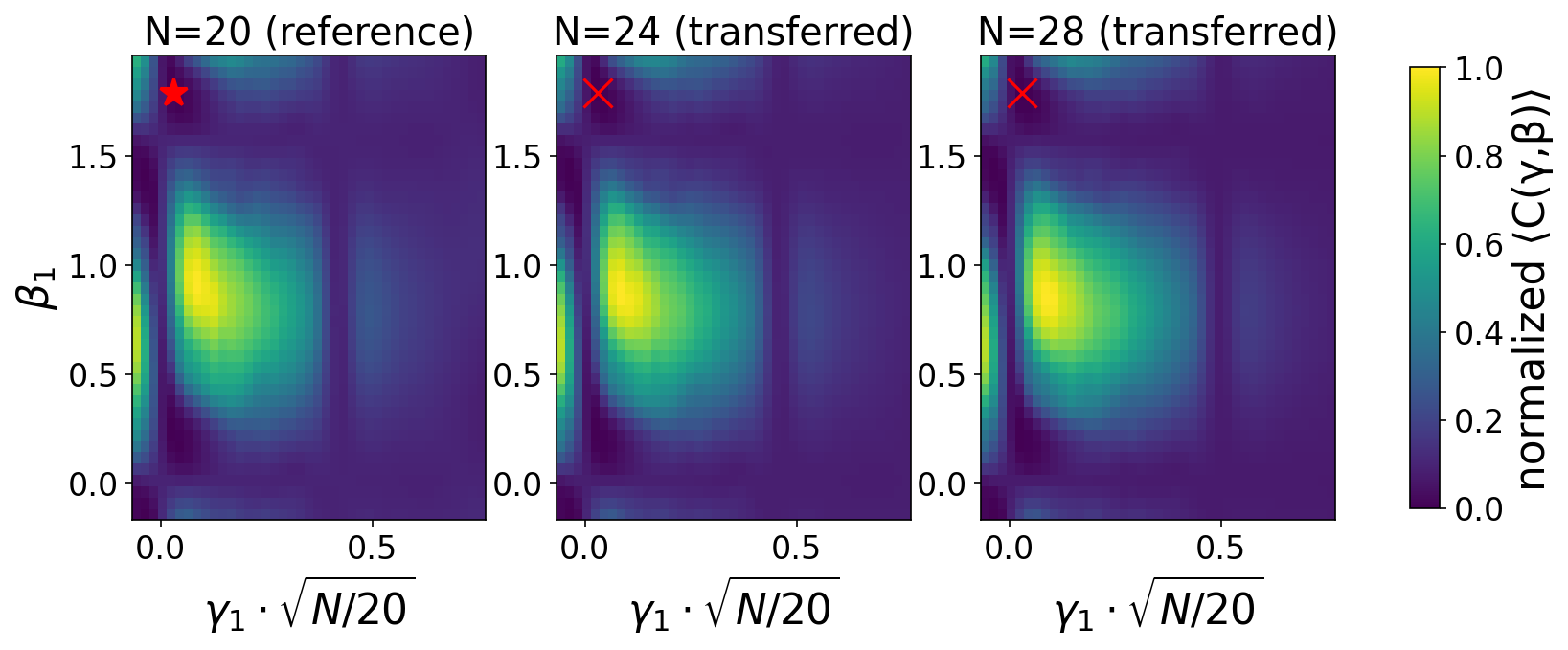}
        \caption{$p=2$: reference size $n_{\rm ref}=20$ and transferred sizes $n=24,28$.}
        \label{fig:landscape_p2}
    \end{subfigure}
    \\[8pt]
    \begin{subfigure}[!t]{0.54\linewidth}
        \centering
        \includegraphics[width=\linewidth]{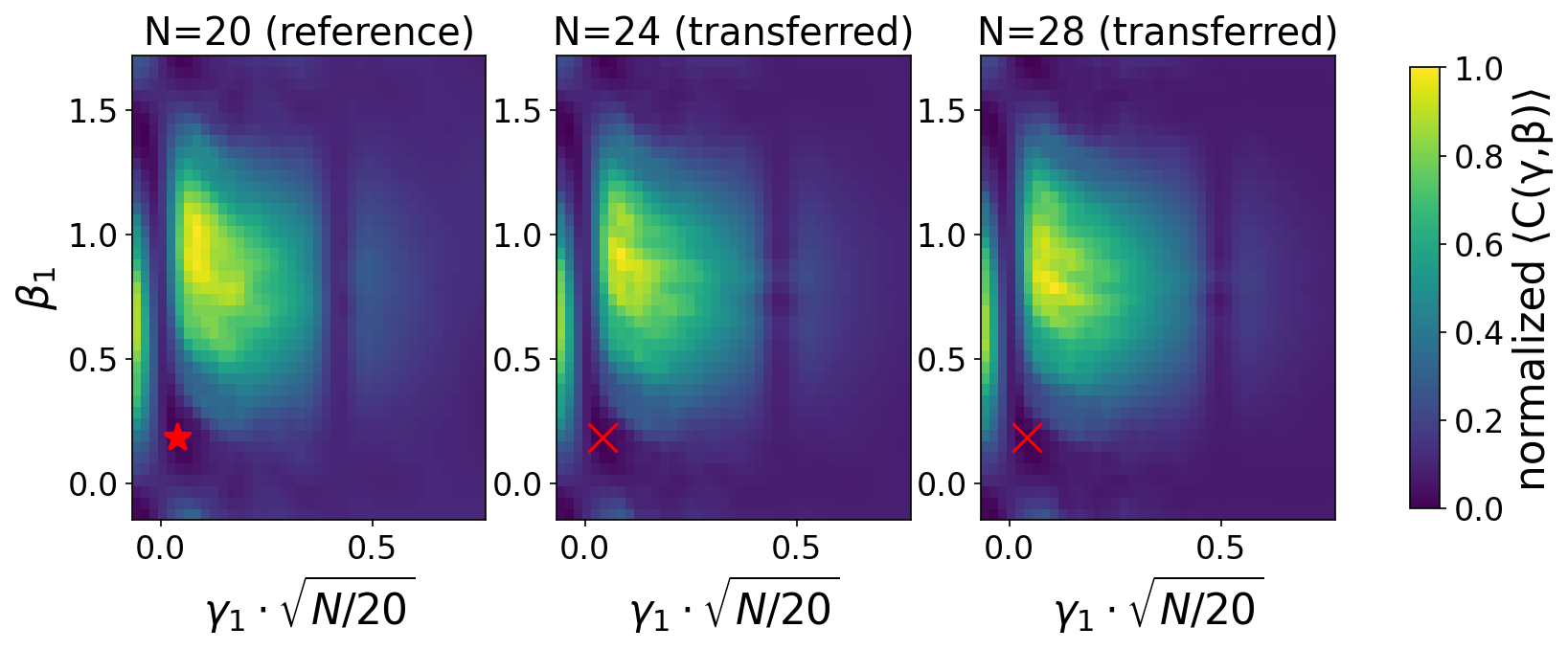}
        \caption{$p=3$: reference size $n_{\rm ref}=20$ and transferred sizes $n=24,28$.}
        \label{fig:landscape_p3}
    \end{subfigure}
    \caption{QAOA landscape $\braket{\hat C(\gamma,\beta)}$ for a single representative instance ($\rho=1.2$, seed $0$) at depths (a)~$p=1$, (b)~$p=2$, and (c)~$p=3$.
    Each panel is normalized independently to $[0,1]$ over its own minimum and maximum.
    Panel (a) shows the full $p=1$ landscape. 
    Panels (b) and (c) show a 2D slice varying only the first layer's $(\gamma_1,\beta_1)$ with all other layers' angles held fixed at their actual optimized or transferred values.
    In each row, the leftmost panel shows the reference size $n_{\rm ref}=20$, with a star marking the optimized angle, and the remaining panels show angles transferred from the reference optimum using the rescaling rule of Section~\ref{sec:quantum_preconditioning}.
    The horizontal axis is rescaled by $\sqrt{n/n_{\rm ref}}$ so that the transferred marker (cross) is directly comparable to the reference star.}
    \label{fig:landscape_combined}
\end{figure*}

\section*{Reproducibility}
The code and data that generate all figures and numerical results in this paper are available at \url{https://github.com/SECQUOIA/Quantum_Preconditioning_Constrained_Optimization}. The repository includes the instances, the correlation matrices, the penalty-threshold computations of Appendix~\ref{sec:penalty_proofs}, and the scripts that produce each figure.

\section*{Acknowledgments}
Authors A.R. and D.E.B.N. acknowledge support by the Industry-University Cooperative Research Center Program at the US National Science Foundation under Grant No. 2224960. Author M.D. acknowledges support by the U.S. Department of Energy, Office of Science, National Quantum Information Science Research Centers, Superconducting Quantum Materials and Systems Center (SQMS), under Contract No. 89243024CSC000002. Fermilab is operated by Fermi Forward Discovery Group, LLC under Contract No. 89243024CSC000002 with the U.S. Department of Energy, Office of Science, Office of High Energy Physics.

\newpage

\bibliography{references}
\end{document}